\newcommand\redout{\bgroup\markoverwith{\textcolor{red}{\rule[0.5ex]{2pt}{0.8pt}}}\ULon}
\newcommand{\new}[1]{{\color{black}#1}}
\newtheorem{theorem}{Theorem}
\newtheorem{lemma}[theorem]{Lemma}
\newtheorem{proposition}[theorem]{Proposition}
\newtheorem{corollary}[theorem]{Corollary}
\newtheorem{definition}{Definition}
\newtheorem{example}{Example}[section]
\newtheorem{remark}{Remark}[section]
\tikzset{font=\large}
\newcommand\nc\newcommand
\nc\bfa{{\boldsymbol a}}\nc\bfA{{\boldsymbol A}}\nc\cA{{\mathscr A}}
\nc\bfb{{\boldsymbol b}}\nc\bfB{{\boldsymbol B}}\nc\cB{{\mathscr B}}
\nc\bfc{{\boldsymbol c}}\nc\bfC{{\boldsymbol C}}\nc\cC{{\mathscr C}}
\nc\bfd{{\boldsymbol d}}\nc\bfD{{\boldsymbol D}}\nc\cD{{\mathscr D}}
\nc\bfe{{\boldsymbol e}}\nc\bfE{{\boldsymbol E}}\nc\cE{{\mathscr E}}
\nc\bff{{\boldsymbol f}}\nc\bfF{{\boldsymbol F}}\nc\cF{{\mathscr F}}
\nc\bfg{{\boldsymbol g}}\nc\bfG{{\boldsymbol G}}\nc\cG{{\mathscr G}}
\nc\bfh{{\boldsymbol h}}\nc\bfH{{\boldsymbol H}}\nc\cH{{\mathscr H}}
\nc\bfi{{\boldsymbol i}}\nc\bfI{{\boldsymbol I}}\nc\cI{{\mathcal I}}
\nc\bfj{{\boldsymbol j}}\nc\bfJ{{\boldsymbol J}}\nc\cJ{{\mathscr J}}
\nc\bfk{{\boldsymbol k}}\nc\bfK{{\boldsymbol K}}\nc\cK{{\mathscr K}}
\nc\bfl{{\boldsymbol l}}\nc\bfL{{\boldsymbol L}}\nc\cL{{\mathscr L}}
\nc\bfm{{\boldsymbol m}}\nc\bfM{{\boldsymbol M}}\nc{\cM}{{\mathscr M}}
\nc\bfn{{\boldsymbol n}}\nc\bfN{{\boldsymbol N}}\nc\cN{{\mathscr N}}
\nc\bfo{{\boldsymbol o}}\nc\bfO{{\boldsymbol O}}\nc\cO{{\mathscr O}}
\nc\bfp{{\boldsymbol p}}\nc\bfP{{\boldsymbol P}}\nc\cP{{\mathscr P}}
\nc\bfq{{\boldsymbol q}}\nc\bfQ{{\boldsymbol Q}}\nc\cQ{{\mathscr Q}}
\nc\bfr{{\boldsymbol r}}\nc\bfR{{\boldsymbol R}}\nc\cR{{\mathscr R}}
\nc\bfs{{\boldsymbol s}}\nc\bfS{{\boldsymbol S}}\nc\cS{{\mathscr S}}
\nc\bft{{\boldsymbol t}}\nc\bfT{{\boldsymbol T}}\nc\cT{{\mathscr T}}
\nc\bfu{{\boldsymbol u}}\nc\bfU{{\boldsymbol U}}\nc\cU{{\mathscr U}}
\nc\bfv{{\boldsymbol v}}\nc\bfV{{\boldsymbol V}}\nc\cV{{\mathscr V}}
\nc\bfw{{\boldsymbol w}}\nc\bfW{{\boldsymbol W}}\nc\cW{{\mathscr W}}
\nc\bfx{{\boldsymbol x}}\nc\bfX{{\boldsymbol X}}\nc\cX{{\mathscr X}}
\nc\bfy{{\boldsymbol y}}\nc\bfY{{\boldsymbol Y}}\nc\cY{{\mathscr Y}}
\nc\bfz{{\boldsymbol z}}\nc\bfZ{{\boldsymbol Z}}\nc\cZ{{\mathscr Z}}
\nc{\bb}{{\mathbbm{1}}}
\nc\reals{{\mathbb R}}
\nc{\ff}{{\mathbb F}}
\nc{\PP}{{\mathbb P}}
\nc{\integers}{{\mathbb Z}}
\DeclareMathOperator{\rank}{rk}
\DeclareMathOperator{\supp}{supp}
\nc{\Cay}{{\sf Cay}}
\newcommand\remove[1]{}
\newcommand{\F}{\EuScript F}
\newcommand{\f}{\mathbb F}
\newcommand{\one}{\mathbf 1}
\newcommand\isomto{\stackrel{\sim}{\smash{\longrightarrow}\rule{0pt}{0.5ex}}}
\DeclareMathOperator{\Ima}{Im}
\DeclareSymbolFont{bbold}{U}{bbold}{m}{n}
\DeclareSymbolFontAlphabet{\mathbbold}{bbold}
\begin{document}
\title{High-rate storage codes on triangle-free graphs}
\author{Alexander Barg}\address{Department of ECE and Institute for Systems Research, University of Maryland, College Park, MD 20742, USA}\email{abarg@umd.edu}
\author{Gilles Z{\'e}mor}\address{Mathematics Institute, UMR 5251, University of Bordeaux, France}\email{Gilles.Zemor@math.u-bordeaux.fr}

\begin{abstract} Consider an assignment of bits to the vertices of a connected graph $G(V,E)$ with the property that the value of each 
vertex is a function of the values of its neighbors. A collection of such assignments is called a {\em storage code} of length $|V|$ on $G$. 
The storage code problem can be equivalently formulated as maximizing the probability of success in a {\em guessing game} on graphs, or constructing {\em index codes} of small rate.

If $G$ contains many cliques, it is easy to construct codes of rate close to 1, so a natural problem is to construct high-rate codes on 
triangle-free graphs, where constructing codes of rate $>1/2$ is a nontrivial task, with few known results. In this work we construct infinite families of linear storage codes with high rate relying on coset graphs of binary linear codes. We also derive necessary conditions for such codes to have high rate, and even rate potentially close to one. 

We also address correction of multiple erasures in the codeword, deriving recovery guarantees based on expansion properties of the graph.

Finally, we point out connections between linear storage codes and quantum CSS codes, a link to bootstrap percolation and contagion spread in
graphs, and formulate a number of open problems.
\end{abstract}
\keywords{Storage codes, index codes, guessing number, Cayley graphs, CSS codes, bootstrap percolation}
\maketitle

\section{Introduction}
In this paper we consider a class of codes on graphs known as {storage codes}. Given an undirected graph $G(V,E)$ with $N$ vertices, denote by $\cN(v)=\{u: (v,u)\in E\}$ the set of neighbors of the vertex $v$ in $G$. Consider a set of vectors $\cC=\{x: x\in Q^N\}$ over a finite alphabet $Q$, where the coordinates are indexed by the vertices in $V$. \new{We assume some fixed order of the vertices of $G$ throughout the paper.} The set $\cC$ is said to form a {\em storage code} if for every $v\in V$ and $x,y\in \cC,$ if $x_u=y_u$ for all $u\in \cN(v)$ then also $x_v=y_v.$ In other words, the codewords are written on the vertices of $G$, and for any codeword the value of the vertex $v$ can be uniquely determined by the values of its neighbors. Thinking of storing the coordinates of the codeword at different nodes of a distributed storage system, this definition implies that an erased coordinate (vertex) can be recovered in a local way from its immediate neighborhood. This definition can be also phrased in terms of recovery functions $f_v$ that map the subcodewords $(x_u)$, $u\in\cN(v)$, on the value $x_v$ of $v.$ Note that $f_v$ generally depends on $v$, and the functions for different vertices may be different.

The concept of storage codes on graphs was introduced around 2014 by Mazumdar \cite{Mazumdar2014,Mazumdar2015} and Shanmugam and Dimakis 
\cite{Shanmugam2014}, motivated by earlier works on index coding by Alon et al. \cite{AloLubStaWeiHas2008} and codes with locality for distributed storage introduced by Gopalan et al. \cite{gopalan2011locality}. The defining property of codes with locality is the ability to recover a missing coordinate of the codeword 
based on the values of the symbols in a small subset of other coordinates (the repair group), with the goal of minimizing the size of
these subsets and reducing internodal communication for completing the recovery task. Storage codes on graphs additionally assume that the links between the nodes are established based on physical proximity and the associated energy constraints, limitations of the system 
architecture, or other features with the same effect. This constraint, modeled by the graph $G$, defines the neighborhood of each vertex
and guides the construction of the code used to store information on the vertices.
Similarly, the problem of index coding addresses constructions of broadcast functions that distribute information to the vertices
of the graph whereby each vertex has access to the ``side information'' stored on the vertices in its neighborhood in the graph.

Essentially the same problem, in a different guise, appears in a line of works devoted to guessing games on graphs, e.g., \cite{CM2011,Cameron2016}, which has developed independently of both the storage codes and index coding problems. That these groups of problems are largely equivalent was realized in a number of papers,
and the historical development is detailed in \cite{AK2015} from the index codes' perspective. We present a brief summary of the relations between these three problems in Section~\ref{sec:3P} below.

\begin{example}\label{Example:5} \new{\rm To give an example, consider the graph in Figure~\ref{fig:graph}. We can construct a storage code by assigning to its vertices $1,2,\ldots,5$, 
binary vectors $(x_1,x_2,\dots,x_5)$ that satisfy $x_1=x_2$ and $x_3+x_4+x_5=0$. Clearly, every vertex
can recover its value from its neighborhood, for instance, the recovery functions of the vertices $1$ and $3$ are $f_1(\{x_2,x_3\})=x_2$ and $f_3(\{x_1,x_4,x_5\})=x_4+x_5$, respectively. All the possible assignments give rise to a linear binary storage code of length $N=5$ and dimension 3, so the rate of the
code is $R(G)=3/5.$}

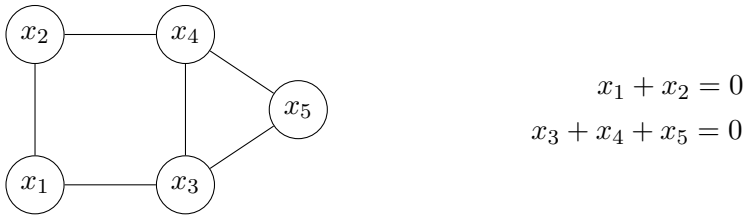
\begin{figure}[ht]
\begin{center}
\begin{tikzpicture}
[
vertex_style/.style={circle, draw,minimum size=0.01cm,scale=1.0},
vertex_style1/.style={circle, draw, scale=1.0},
edge_style1/.style={thick},edge_style2/.style={thick, black}]
 
\node[vertex_style]  (v1) at (-1,0.5)[draw] {$x_1$};
\node[vertex_style]  (v2) at (-1,2.5)[draw] {$x_2$};
\node[vertex_style]  (v3) at (1.0,0.5)[draw] {$x_3$};
\node[vertex_style] (v4) at (1.0,2.5)[draw] {$x_4$};
\node[vertex_style] (v5) at (2.5,1.5)[draw] {$x_5$};

\draw (v1) -- (v2) -- (v4)-- (v3) -- (v1);
\draw (v4) -- (v5) -- (v3);

\node at (7.1,1.8) {$\phantom{x_3+ }x_1+x_2=0$};
\node at (7,1.2) {$x_3+x_4+x_5=0$};

\end{tikzpicture}
\end{center}
\caption{Graph and storage code}
\label{fig:graph}
\end{figure}
\end{example}

The main question concerning storage codes is determining the largest possible cardinality of the code $\cC$ for a given graph $G$. Below we denote by $R_q(G)$ the maximum possible rate $\frac 1N \log_q|\cC|$ of a $q$-ary storage code for a given graph $G$. 
Determining $R_q(G)$ (or even $R(G):=\sup_q R_q(G)$ if we wish to optimize on the choice of the code alphabet) 
is a difficult problem related to the so-called minrank of the graph, and we refer to \cite{MazMcgVor2019} for the 
best known bounds as well as an overview of the earlier results. Below we limit ourselves to finite field alphabets $Q=\f_q$ and assume that the code $\cC$ forms a {\em linear subspace} of $\f_q^N.$ While nonlinear functions can sometimes attain higher rates \cite{AloLubStaWeiHas2008}, adding linearity enables one to rely on various known structures to construct families of storage codes. Generally, improved rates associated with nonlinear storage/index codes require specially designed dependence graphs, while linear maps support constructions for large graph families based on universal methods.

In the next section we overview the known constructions of storage codes. It turns out that in dense graphs, for 
instance, graphs with many cliques, attaining high rate is easy: \new{ in particular, if the graph $G$ on vertices $1,2,\ldots, n$ is itself a clique, then it affords a storage code of rate close to one defined by the single parity check equation $\sum_{i=1}^nx_i=0$, that is common to all the vertices. More generally, 
if we partition the vertex set of the graph into cliques, we can define a storage code with one single equation for every clique. In the example of Figure~\ref{fig:graph}, there are two equations, one for clique $\{1,2\}$ and one for clique $\{3,4,5\}$. Therefore graphs with many large cliques tend to admit storage codes of high rate.
But in graphs with no cliques of size larger than $2$, i.e., triangle-free graphs, 
the simple clique partition strategy achieves at best rate $1/2$. Other more refined constructions also fail to exceed rate $1/2$ when graphs are triangle-free
 (more on this below), and it was also proved that bipartite graphs, a large class of triangle-free graphs, do not admit storage codes of rates that exceed $1/2$. 
 This motivated
us to look for triangle-free graphs that do admit storage codes of rate larger than $1/2$, and this forms the main subject of this paper.
Constructing codes of high rate on such graphs represents a challenge, and early studies \cite{CM2011} have conjectured 
that with no triangles, $R=1/2$ is the largest attainable rate value. While this conjecture was refuted in \cite{Cameron2016}, only isolated examples 
of such codes (and their direct extensions to infinite families; see Remark~\ref{remark:copies} below) have been presented in the literature. }

The codes that we consider are constructed on Cayley graphs on $\f_2^r$ 
(coset graphs of binary linear codes). While coset graphs form a classic topic in coding theory, one 
motivation of this work is drawn from a recent construction of quantum CSS codes obtained from linear spaces defined by adjacency matrices
of coset graphs \cite{Couvreur2013}. 
Our main results are constructions of two infinite families of binary linear storage codes of rate above 1/2, one of which 
in fact has rate $R_2(G)$ approaching 3/4 for growing length $N$, exceeding the rate 5/8 of the unique previously known binary storage code with rate above 1/2 for triangle-free graphs. The exact dimension of these storage codes is computed.
We also formulate necessary conditions for the code 
dimension to be high compared to the number of vertices $N$. 

In addition, we present computer-assisted results that give sporadic examples of
storage codes with high rate on triangle-free coset graphs: in particular we uncover an example of a storage code on a triangle-free graph
whose rate exceeds 3/4 and surpasses that of all previously known storage codes over any alphabet.

Finally, we address a problem left open in \cite{Mazumdar2015}, namely that of correcting multiple erasures in the codeword: we derive local recovery guarantees based on expansion properties of the graph, and make a connection with {\em bootstrap percolation}.

The paper is organized as follows. Section~\ref{sec:background} is dedicated to background material on storage codes, with connections to index coding and guessing games. Section~\ref{sec:Coset graphs} introduces storage codes on coset graphs of linear codes and makes a connection with quantum coding,
transforming a known family of quantum codes into storage codes.
Section~\ref{sec:necessary} derives necessary conditions for storage codes on triangle-free coset graphs to have large rates.
Section~\ref{sec:3/4} is devoted to constructing a family of binary linear codes and proving that their coset graphs are triangle-free and admit storage codes of rate approaching 3/4. 
Section~\ref{sec:multiple} addresses the problem of correcting multiple erasures for storage codes on expander graphs. Finally,
Section~\ref{sec:discussion} concludes with results on some storage code rates for triangle-free coset graphs found with computer help, and with miscellaneous comments and open problems.

\section{Background}\label{sec:background}

\subsection{Storage codes, index codes, and guessing games}\label{sec:3P}
The problem of finding the largest storage code for the graph $G$ is closely related to two other recently introduced problems in computer science, {\em index coding} and {\em guessing games on graphs}. In the symmetric index coding problem, an information vector $x=(x_1,\dots,x_N)$ is to be distributed to $N$ users via a single broadcast with the goal of furnishing user $i$ with the symbol $x_i.$ The users possess ``side information'' about their
intended messages in the form of a subset of coordinates of the vector $x$. Specifically, let $N_i\subset\{1,\dots,N\}$ be a subset of indices, and suppose that user $i$ has access to symbols $x(N_i):=(x_j,j\in N_i)$. The index coding problem seeks to construct an encoding (broadcast) 
function $f:\f_q^N\to W,$ where $K:=\log_q|W|<N,$ and $N$ decoding functions $\phi_i$ such that $\phi_i(f(x),x(N_i))=x_i$ for all $i=1,\dots,N$.

The index coding problem is conveniently described in terms of a {\em side information graph} $G(V,E)$ with $N$ vertices and $(i,j)\in E$ if and
only if $j\in N_i.$ We assume that the relation $j\in N_i$ is symmetric in $i,j$, and thus the edges are 
undirected. 
The value $I_q(G,f):=K/N$ is called the rate of the $q$-ary symmetric index code $(f,\{\phi_i\})$, 
and the objective of the construction is to devise broadcast functions (index codes) of minimum rate for a 
given side information graph.  
The minimum possible rate of the index code for the side information graph $G$ is called the capacity $I_q(G)$ of index coding for $G$. It is often possible to reduce the rate value $I_q(G)$ by increasing $q$, resulting in
a universal characterization $I(G)$ of index coding for a given graph.

The problems of storage coding and index coding are to a large extent equivalent, namely, the following is true:
    \begin{enumerate}
\item[(A)] Any graph $G$ defines an index coding problem $(f,\{\phi_i\})$ wherein the sets $N_i=\cN(i)$ are the neighborhoods of vertices 
$i\in V$. Furthermore, for any
value $s$ of the broadcast function, the preimage $f^{-1}(s)$ is a storage code for the graph $G$.
\item[(B)] If there is a partition $(\cC_j)_{j\in W}$ of $\f_q^N$ into a set of storage codes $\cC_s$ for $G$, then the map 
$\f_q^N \to W$ that associates a vector $x$ to the index $s$ such that $x\in \cC_s$, is a valid broadcast 
function for the index coding problem with side information graph $G.$
\item[(C)] Moreover, if the broadcast function $f$ for an index coding problem defined by a graph $G$ is linear, then its kernel is a storage code. Conversely, if $\cC$ is an $\f_q$-linear $[N,K]$ storage code, then any syndrome function $f:\cC\to 
\f^{N-K}$ is a broadcast function for the corresponding index coding problem. This link is a starting point of the equivalence proof of the two problems in \cite{Mazumdar2014}.
\end{enumerate}
As shown in \cite{AloLubStaWeiHas2008,Mazumdar2015}   
   $$
   1-I_q(G)\le R_q(G)\le 1-I_q(G)+N^{-1}\log_q(N\ln q),
   $$
and thus $\sup_q(R_q(G))+\inf_q(I_q(G))=1$.

Another equivalent formulation of the storage coding problem arises from the study of guessing games 
on graphs initiated in \cite{Riis2007}. In one version of the game, the vertices in $V$ are assigned 
elements of a finite set $Q$ (colors), 
and each vertex $v$ attempts to guess its color based on the colors of its neighbors in $\cN(v).$ The
game is won if all the vertices correctly guess their colors. They may agree on the guessing 
strategy in advance, but once they are assigned colors, all communication is forbidden. 
The assignment $x\in Q^N$ is assumed uniformly random, and the participants (vertices) attempt 
to maximize their probability of success $P_s.$ Any storage code $\cC(G)$ for the graph $G$
defines a guessing strategy, namely every participant will assume that $x\in\cC$ and reconstruct their $x_v$ from the values of their neighbors:
we thus have $P_s(\cC,G)=\frac{|\cC(G)|}{q^N}=\exp(\log|\cC|-N).$ Conversely, any guessing strategy defines a storage code, namely the set of values $x$ for which the strategy succeeds, 
and thus maximizing $P_s$ is equivalent to constructing large-size storage codes. The authors of \cite{Cameron2016} 
define the {\em guessing number} of $G$ as 
     $
     {\sf gn}_q(G)=N+\log_q\max P_s(\cC,G);
     $
thus in our notation ${\sf gn}_q(G)=N R_q(G).$ In the context of guessing games it has been shown that the storage
capacity $R_q(G)$ is almost monotone on $q$. Namely, the following is true.
\begin{theorem}[\cite{Gadouleau2011,CM2011,Cameron2016}] For any graph $G$, alphabet size $q,$ and $\epsilon>0$
there exists $q_0(G,q,\epsilon)$ such that for all $q'>q_0$
   $$
   R_{q'}(G)\ge R_q(G)-\epsilon.
   $$
Consequently, the limit $\lim_{q\to\infty} R_q(G)$ exists.
\end{theorem}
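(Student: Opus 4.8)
The plan is to manufacture, from an optimal $q$-ary storage code on $G$, a $q'$-ary storage code of almost the same rate whenever $q'$ is large, by a layering (tensor-power) construction followed by a harmless enlargement of the alphabet. No field structure enters, so the argument applies to arbitrary $q$-ary codes. First I would fix a storage code $\cC$ on $G$ over an alphabet $Q$ of size $q$ attaining $\frac1N\log_q|\cC|=R_q(G)$ (the maximum is attained, as there are only finitely many subsets of $Q^N$). For a positive integer $k$, let $\cC^{(k)}$ be the code on $G$ over the alphabet $Q^k$ whose codewords are the $k$-tuples $(x^{(1)},\dots,x^{(k)})$ with each $x^{(i)}\in\cC$, the symbol at vertex $v$ being $(x^{(1)}_v,\dots,x^{(k)}_v)$. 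Applying the recovery function $f_v$ of $\cC$ in each layer reconstructs the symbol at $v$ from those of its neighbours, so $\cC^{(k)}$ is a storage code on $G$; it has length $N$, alphabet size $q^k$, and size $|\cC|^k$, hence rate $\frac1N\log_{q^k}|\cC|^k=\frac1N\log_q|\cC|=R_q(G)$. This already gives $R_{q^k}(G)\ge R_q(G)$ exactly for every $k$ (and, identifying $Q^k$ with $\f_{q^k}$, it preserves linearity when $\cC$ is linear).

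Next I would interpolate between the powers of $q$. Given $q'$, set $k=\floor{\log_q q'}$, so $q^k\le q'$, embed $Q^k$ into an alphabet $Q'$ of size $q'$, and regard $\cC^{(k)}$ as a code over $Q'$: extending the recovery functions arbitrarily on the unused symbols of $Q'$ is harmless, since those symbols occur in no codeword, so $\cC^{(k)}$ remains a storage code on $G$. Its rate as a $q'$-ary code equals $\frac1N\log_{q'}|\cC|^k=\frac{k}{\log_q q'}\,R_q(G)\ge\bigl(1-\tfrac1{\log_q q'}\bigr)R_q(G)\ge R_q(G)-\tfrac1{\log_q q'}$, the last step using $R_q(G)\le 1$. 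Thus $R_{q'}(G)\ge R_q(G)-\epsilon$ whenever $\log_q q'\ge 1/\epsilon$, so one may take $q_0:=q^{1/\epsilon}$.

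For the final claim, since $0\le R_q(G)\le 1$, put $L=\limsup_{q\to\infty}R_q(G)$. For any $\epsilon>0$ pick $q$ with $R_q(G)>L-\epsilon$; the inequality just proved gives $R_{q'}(G)>L-2\epsilon$ for all $q'>q_0(G,q,\epsilon)$, whence $\liminf_{q\to\infty}R_q(G)\ge L-2\epsilon$. Letting $\epsilon\to 0$ yields $\liminf=\limsup$, so $\lim_{q\to\infty}R_q(G)$ exists.

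The main difficulty, such as it is, sits in the interpolation step: one has to be sure that (i) padding the alphabet of a storage code never destroys the storage property---which holds because the recovery functions are constrained only on symbols that actually appear---and (ii) the mismatch between $q'$ and the nearest power of $q$ is negligible, which is the elementary fact $\floor{\log_q q'}/\log_q q'\to 1$. The $\epsilon$ loss is intrinsic to this method, since the layering construction yields $R_{q'}(G)\ge R_q(G)$ with no loss only when $q'$ is an exact power of $q$.
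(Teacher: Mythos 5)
Your proof is correct: the tensor-power step, the observation that padding the alphabet cannot destroy the storage property (which is intrinsic to the codeword set), and the interpolation via $k=\floor{\log_q q'}$ together give exactly the claimed bound, and the limit argument is standard. Note that the paper itself offers no proof of this theorem --- it is quoted from \cite{Gadouleau2011,CM2011,Cameron2016} --- and your argument is essentially the standard product-plus-alphabet-enlargement argument used in those references, so there is nothing in the paper to contrast it with.
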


The equivalence between index coding, storage coding, and guessing games is further discussed in \cite{AK2015}.

\subsection{Constructions of storage codes}\label{sec:constructions}
To motivate the problem that we will study in the next section, we briefly overview the known constructions; see \cite{CM2011,Cameron2016,MazMcgVor2019} and also \cite[Ch.6]{AK2018} for the details. 

\subsubsection {Matching construction} A matching $M\subset E$ in a graph $G(V,E)$ \new{is a subset of edges such that
every vertex $v\in V$ is incident to at most on edge from $M$, and a matching is {\em perfect} if every $v\in V$
is incident to exactly one edge from it.} A matching $M$ defines a linear storage code $\cC\colon\f_q^{|M|}\to \f_q^N$ by
associating to $(x_e)_{e\in M}$ the vector $c\in \f_q^N$ such that for every vertex $v$ incident to an edge $e\in M$ we assign
$c_v=x_e$ and for every remaining vertex $v$ we put $c_v=0.$ In particular, if $G$ is $d$-regular and bipartite, it contains
a perfect matching, giving rise to a storage code $\cC(G)$ of rate $1/2$ independently of $q.$ If $G$ is not bipartite, we can take a double cover\footnote{\new{A bipartite double cover of $G$ is a graph with adjacency matrix
$\left(\!\!\!\begin{array}{c@{\hspace*{1pt}}c}0&A\\A&0\end{array}\!\!\!\right)$, where $A$ is the adjacency matrix of $G$.}} $\bar G$ of $G$.
Now, $\bar G$ is a bipartite regular graph, and we can apply the matching encoding and obtain a code $\bar\cC\subset \f_q^{2N}.$ This code can be mapped on
a code $\cC$ over $\f_{q^2}^N$ obtained by grouping together pairs of symbols indexed by vertices $v',v''\in V(\bar G)$
that correspond to the same vertex $v\in V(G).$ This construction gives a way of obtaining a code  of rate $1/2$ on any regular graph. Clearly, 
the value of any vertex in the codeword is recoverable from its neighbors, implying that $\cC$ is a storage code for $G.$

\subsubsection{Edge-vertex construction} An alternative way of obtaining a code of rate $1/2$ on a $d$-regular graph $G(V,E)$ is the following.
Consider the space $(\f_q)^{|E|}$ of vectors indexed by the edges of $G$. Next, map this space on $(\f_q^d)^N\cong \f_{q^d}^N$ by assigning to
each vertex $v$ a $d$-vector formed of the symbols written on the edges incident to it. In other words, $\cC(G)=\{(c_v, v\in V)\}$ is obtained
as the image of the mapping
      \begin{equation}\label{eq:EV}
      \begin{aligned}
      \f_q^{|E|}&\to \f_{q^d}^N\\
       (x_e)_{e\in E} & \mapsto (c_v)_{ v\in V}, \text{ where }c_v=(x_e)_{e\in \partial(v)},
       \end{aligned}
       \end{equation}
where $\partial(v)$ is the edge neighborhood of $v$ in $G.$ Since $|(\f_q)^{|E|}|=q^{\frac{dN}2},$ the rate of $\cC$ is indeed 1/2. \new{In Figure~\ref{fig:edgevertex} we show this construction for the cycle $C_5,$ where we first place symbols of the $q$-ary alphabet on the edges, and then assign to each vertex the symbols on the edges that are incident to it. Now each vertex can recover its pair of symbols from its neighbors: for instance, $v_1$ takes the second symbol from $v_2$ and the first one from $v_5$ (note again that we fix the order of the vertices). This gives a code of size $|\cC|=q^{|E|}$ with $|E|=5,$ and thus the rate is $R(\cC)=\frac1N\log_{q^2} |C|=1/2.$  }

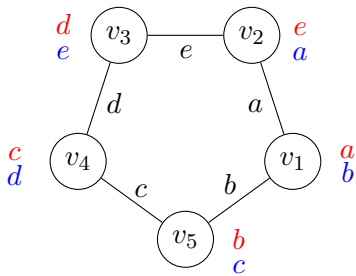
\begin{figure}
\begin{center}\begin{tikzpicture}[rotate=-18]
\node (v1) at ( 0:1.5) [draw,shape=circle] {{$v_1$}};
\node (v2) at ( 72:1.5) [draw,shape=circle] { $v_2$};
\node (v3) at (2*72:1.5) [draw,shape=circle] {$v_3$};
\node (v4) at (3*72:1.5) [draw,shape=circle] {$v_4$};
\node (v5) at (4*72:1.5) [draw,shape=circle] {$v_5$};
\draw (v1) -- (v2) -- (v3)-- (v4)-- (v5)--(v1);
\node at (0.8,0.55) {$a$};
\node at (0.8,-0.55) {$b$};
\node at (-0.3,-1) {$c$};
\node at (-1,0.0) {$d$};
\node at (-0.3,0.95) {$e$};
\node at (1.1,1.55) {$\begin{matrix} \textcolor{red}{e}\\[-.05in]\textcolor{blue}{a} \end{matrix}$};
\node at (2.2,0.2) {$\begin{matrix} \textcolor{red}{a}\\[-.05in]\textcolor{blue}{b} \end{matrix}$};
\node at (1.2,-1.4) {$\begin{matrix} \textcolor{red}{b}\\[-.05in]\textcolor{blue}{c} \end{matrix}$};
\node at (-2.0,-1.2) {$\begin{matrix} \textcolor{red}{c}\\[-.05in]\textcolor{blue}{d} \end{matrix}$};
\node at (-1.9,0.6) {$\begin{matrix} \textcolor{red}{d}\\[-.05in]\textcolor{blue}{e} \end{matrix}$};
\end{tikzpicture}
\end{center}
\vspace*{-.15in}\caption{Edge-vertex construction of a storage code}
\label{fig:edgevertex}
\vspace*{.15in}
\end{figure}

\subsubsection{Clique-vertex construction} The edge-vertex construction affords a straightforward generalization if every vertex $v\in V$ is incident to the same number (say $m$) of $k$-cliques, where $k>2.$ Let $\cK$ be the set of $k$-cliques in $G$ and let us map $(\f_q^{k-1})^{|\cK|}$ to $(\f_q^m)^n$ by placing a $q$-ary code of length $k$ and dimension $k-1$ on every clique and distributing the symbols of the $k$-codeword
to the vertices that form the clique. The rate of the code $\cC(G)$ obtained as a result equals $(k-1)/k;$ for instance, a triangular lattice gives rise to a code of rate $2/3,$ etc. 
1
\subsubsection{Fractional clique cover} A {fractional clique cover} of a graph $G$ is a collection $\cK$ of
cliques $\kappa$ together with a weight $w_\kappa\in [0,1]$, such that for every vertex
$v$, we have 
   \begin{equation}\label{eq:fc}
\sum_{\kappa\in\cK: v\in \kappa} w_{\kappa}\geq 1
   \end{equation}
To construct a storage code for $G$, we apply the clique-vertex construction to the subset of cliques that make up the fractional clique cover. The vertices may not all be incident to the same number of cliques, in which case the alphabet size may depend on the vertex (a mixed-alphabet code), or we have to use the alphabet $\f_q^m$ where $m$ is the largest vertex-degree in the vertex to clique incidence graph. 

\subsection{Bounds for storage codes}
It is known that $R_q(G)$ satisfies the constraints
  \begin{equation}\label{eq:VC}
  M(G)\leq N R_q(G)\leq N-\alpha(G)
  \end{equation}
where $M(G)$ is the size of the largest matching in $G$ and $\alpha(G)$ is the \new{size of the largest independent set} in  $G$, \new{i.e., vertices with no edge among them}. This result was proved in \cite{Mazumdar2015} for storage codes and in an earlier independent work \cite{CM2011} in the language of guessing games. Since the 
complement of a maximum independent set forms a vertex cover in $G$ \new{(a set of vertices that touches every edge in
$E(G)$)}, the upper bound in \eqref{eq:VC} is often stated in terms of the vertex cover number. 
\new{A consequence of \eqref{eq:VC} is that if the graph $G$ is not only triangle-free but bipartite, we must have $R\leq 1/2$ since a bipartite graph has an independent set of size at least $N/2$.}

To state another bound, recall that a fractional clique cover is called {\em regular} if its weighting $w:K(G)\to[0,1]$ satisfies
 \eqref{eq:fc} with equality.
As shown in \cite{CM2011} (for the guessing number) and in \cite{BKL2013} (for the broadcast rate), 
  $$
  R(G)\geq 1-\frac{1}{N}\kappa_f(G),
  $$
where $\kappa_f(G):=\min\sum_{\kappa}w_\kappa$ is the minimum weight of a regular fractional cover. The authors of \cite{CM2011} observed that this bound holds with equality for perfect graphs and cycles or their complements.

A linear program for bounding $R_q(G)$ was introduced by the authors of \cite{MazMcgVor2019},
adapting a similar bound for $I_q(G)$ \cite{BKL2013} to the case of storage codes. As a result, \cite{MazMcgVor2019} showed an upper bound $R_q(G)\le 1/2$ for several classes of graphs. 

Below we focus on the question of identifying graphs that support storage codes of the largest rate. Without any constraints, the complete graph $K_N$
affords a linear storage code of rate $\frac{N-1}N$ (defined by the single equation $\sum_{v\in V}x_v=0$). The problem becomes less trivial if we limit ourselves
to sparse graphs $G$, for instance, $d$-regular graphs on $N$ vertices with varying $N$.
\begin{proposition}\label{prop:B} Let $G$ be a connected $d$-regular graph on $N$ vertices, then
    $
    R_q(G)\le \frac{N+1}N-\frac 1{d+1}.
    $
There exist $d$-regular graphs with $R_q(G)\ge 1-\frac 1{d-2}.$ 
\end{proposition}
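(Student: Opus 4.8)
The plan is to prove the two assertions separately: the upper bound is a direct consequence of the vertex-cover bound \eqref{eq:VC}, and the existence statement follows from the clique-vertex construction of Section~\ref{sec:constructions} applied to a suitably chosen $d$-regular graph.

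For the upper bound, I would start from $NR_q(G)\le N-\alpha(G)$, the right-hand inequality in \eqref{eq:VC}, and bound the independence number from below by the greedy (Caro--Wei) estimate: any graph of maximum degree $d$ satisfies $\alpha(G)\ge N/(d+1)$ (repeatedly pick a vertex and delete it together with its at most $d$ neighbours). Hence
\[
NR_q(G)\le N-\frac{N}{d+1}=N\Bigl(1-\frac1{d+1}\Bigr)\le N\Bigl(\frac{N+1}{N}-\frac1{d+1}\Bigr),
\]
which is the claimed inequality --- in fact the slightly stronger $R_q(G)\le 1-\frac1{d+1}$, attained by $G=K_{d+1}$. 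I do not expect any difficulty in this half.

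For the existence part, the idea is to run the clique-vertex construction on a $d$-regular graph. Fix $d\ge 3$ and a large $N$ divisible by $d$, take $m=N/d$ disjoint copies $C_1,\dots,C_m$ of $K_d$, and add a perfect matching $M$ using only edges between distinct cliques, chosen so that the quotient multigraph on $\{C_1,\dots,C_m\}$ is connected --- for instance, arrange the cliques cyclically and split the $d$ matching-edges leaving each clique roughly evenly between its two cyclic neighbours, adjusting for the parity of $d$. Then each vertex has $d-1$ clique-neighbours and exactly one matching-neighbour, so $G$ is connected and $d$-regular with clique partition $\{C_1,\dots,C_m\}$. Imposing on each clique the single parity check $\sum_{v\in C_j}x_v=0$ defines a linear code $\cC$ of dimension $N-m=N(1-1/d)$, and $\cC$ is a storage code because every $v\in C_j$ recovers $x_v$ from the values on $C_j\setminus\{v\}\subseteq\cN(v)$. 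Therefore $R_q(G)\ge 1-\frac1d\ge 1-\frac1{d-2}$.

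The only point needing care is the purely graph-theoretic bookkeeping: verifying that an inter-clique perfect matching making $G$ connected exists for the admissible $d$ and $N$ (the parities of $d$ and $Nd$ enter here), but this is routine and leaves the rate computation untouched, the latter being verbatim the clique-vertex construction. I also note that if the connectivity of $G$ is not required, disjoint copies of $K_{d+1}$ already give $d$-regular graphs with $R_q(G)=1-\frac1{d+1}$, which even meets the upper bound above.
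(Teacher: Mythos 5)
Your proposal is correct. For the upper bound you follow essentially the paper's route: bound $\alpha(G)$ from below and plug into \eqref{eq:VC}. The paper cites Rosenfeld's result $\alpha(G)\ge\lfloor N/(d+1)\rfloor$, and the floor is exactly where the $(N+1)/N$ slack in the statement comes from; your greedy (Caro--Wei) estimate $\alpha(G)\ge N/(d+1)$ avoids the floor and yields the marginally stronger $R_q(G)\le 1-\frac1{d+1}$, which of course implies the claim. For the existence part your construction genuinely differs from the paper's. The paper takes the Cartesian product of a clique with a cycle of length $s$ and places a single parity check on each clique copy: this gives a connected regular graph with no case analysis and storage rate $1-\frac1{d-2}$ (its cliques have $d-2$ vertices, the cycle supplying the remaining degree). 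You instead take $N/d$ disjoint copies of $K_d$ and join them by an inter-clique perfect matching arranged cyclically; this is still the clique-partition code, but on a graph whose cliques are as large as possible given $d$-regularity, so you get the better rate $1-\frac1d\ge 1-\frac1{d-2}$. The price is the matching bookkeeping you defer: one must check that each clique can split its $d$ outgoing matching edges between its two cyclic neighbours consistently (the alternating counts $a, d-a$ force either $d$ even or an even number of cliques), or simply take two cliques joined by a full matching. This is indeed routine and your sketch identifies the right mechanism, so the gap is cosmetic. Your closing remark is also apt: disjoint copies of $K_{d+1}$ meet the upper bound exactly, but only the connected constructions are interesting in the spirit of Remark~\ref{remark:copies}, which is why both you and the paper go to the extra trouble of a connected family.
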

\begin{proof} \new{Recall that the Cartesian product of graphs $G$ and $H$ is a graph with the set of vertices
$V(G)\times V(H)$ in which vertices $(u,u')$ and $(v,v')$ are adjacent if and only if either
$u=v$ and $(u',v')\in E(H)$ or $(u,v)\in E(G)$ and $v=v'.$} \new{Take $s\ge 3$ and} let
$G$ be the Cartesian product of $K_{d-2}$ and a cycle of length $s$. \new{According to the above definition, the graph obtained as a result is formed of $s$ copies of the 
graph $G$ in which copies of the same vertex are connected by the edges of the cycle.} Now put a code of length $d-2$ on each clique independently, obtaining the claimed lower bound. On the
other hand, any $d$-regular graph \new{on $N$ vertices} contains an independent set of size $\lfloor\frac N{d+1}\rfloor$ 
\cite{Rosenfeld1964}, and together with \eqref{eq:VC} this implies the upper bound.
\end{proof}

\begin{remark}\label{remark:copies} {\rm This last proposition highlights the following fact. If we have a graph $G$ that admits a storage code $C$ of rate $R_q(G)$, we can construct many more graphs with the same storage rate, simply by taking $k\ge 1$ disjoint copies of the original graph $G$. The associated storage code will consist of the Cartesian product of $k$ copies of the original code $C$, whose codewords consist therefore of $k$ successive codewords of $C$. We can also add arbitrary edges to and between the copies of $G$ that the storage code will simply ignore. In what follows, we will be looking for the highest possible rate of a storage code for graphs with specific properties, and will conflate a code with its successive Cartesian powers, considering that we are dealing with the same code `up to repetitions'. }
\end{remark}

\section{Storage codes on triangle-free graphs and the CSS code connection} \label{sec:Coset graphs}

The graphs used to obtain storage code rates close to one are dense and contain a large number of cliques. It is therefore natural to consider the question of the largest attainable rate for graphs that contain
no cliques, i.e., triangle-free graphs. For such graphs none of the methods mentioned above yield a 
storage code rate in excess of $1/2$; in particular, as noted in \cite{BKL2013} (in the language of index codes),  $\kappa_f(G)\ge N/2$.
The same observation was also made in \cite{Cameron2016} in terms of the guessing number of triangle-free graphs. Both \cite{BKL2013} and \cite{Cameron2016} reported only one example of a storage code on triangle-free graphs for which $R_2(G)>\frac 12$ (up to repetitions, in the sense of Remark~\ref{remark:copies}).
This code is associated to the graph $\Gamma$ shown below in Fig.~\ref{fig:Clebsch} (the authors of \cite{Cameron2016} also gave several examples of triangle-free graphs with $R_q(G)>\frac 12$ for $q>2$). This graph can be defined in several ways and
is known as the {\em Clebsch graph} or a {\em folded cube}, see Figure~\ref{fig:Clebsch}. The storage code constructed on $\Gamma$
is a linear binary code $\cC$ of length $N=16$ whose parity-check matrix has the form
  \begin{equation}\label{eq:H}
  {\tilde A}=I_{N}+A(\Gamma),
  \end{equation}
where $A(\Gamma)$ is the adjacency matrix of $\Gamma.$ 
The identity matrix $I_N$ is added since $A$ includes only the neighborhood $\cN(v)$ but not the vertex $v$ itself. Upon checking that the $\f_2$-rank of ${\tilde A}$ equals $6,$ we conclude that the dimension of the code equals $N-\rank {\tilde A}=10,$ 
so $R_2(G)\ge\frac {5}{8}.$ This example refuted a conjecture in \cite{CM2011} which suggested that the fractional clique cover bound holds with equality for triangle-free graphs. It is also easy to check that $\Gamma$ contains independent sets of size $5$, and thus from \eqref{eq:VC} we conclude that $\frac58\le R_2(\Gamma)\le \frac{11}{16},$ where the upper bound is sharper than the result of Prop.~\ref{prop:B}. Using the terminology of guessing games, the guessing number of the graph $\Gamma$ satisfies $10\leq \sf{gn}_2(\Gamma)\leq 11,$ or rephrasing again, the side information graph $\Gamma$ affords a binary index code of rate $3/8.$

Note that the recovery functions of the vertices use full neighborhoods (the parity-check equations have weight 6), even though the general definition of the storage code does not include this constraint. We call binary codes from this subclass {\em full-parity} storage codes,
and it is only such codes that we consider in this and the next sections. 

\begin{figure}[ht]
\vspace*{.5in}
\begin{center}\scalebox{0.5}{
\begin{tikzpicture}
[
vertex_style/.style={circle, draw, fill,minimum size=0.01cm,scale=0.8},
vertex_style1/.style={circle, draw, scale=1.0},
edge_style1/.style={thick},edge_style2/.style={thick, black}]
 
\useasboundingbox (-5.05,-4.4) rectangle (6,6);
 
\begin{scope}[rotate=90]
 \foreach \bfx/\bfy in {0/1,72/2,144/3,216/4,288/5}{
 \node[vertex_style] (\bfy) at (canvas polar cs: radius=4.5cm,angle=\bfx){};
 }
 \foreach \bfx/\bfy in {0/6,72/7,144/8,216/9,288/10}{
 \node[vertex_style] (\bfy) at (canvas polar cs: radius=7.5cm,angle=\bfx){};
 }
 \foreach \bfx/\bfy in {36/11,108/12,180/13,252/14,324/15}{
 \node[vertex_style] (\bfy) at (canvas polar cs: radius=3cm,angle=\bfx){};
 }

\node[circle,draw,fill,double] (16) at (canvas polar cs: radius=0cm,angle=0){};

\path[-]  (12) edge [bend left=30,cyan,ultra thick] (6);\path[-]  (11) edge [bend right=30,cyan,ultra thick] (8);

\foreach \bfx/\bfy in{7/2,1/3,13/4,14/16,15/5,9/10}{
\path[-] (\bfx) edge [cyan,ultra thick] (\bfy);
}

\path[-] (7) edge [bend left=30,blue,ultra thick] (15);\path[-] (11) edge [bend left=30,blue,ultra thick] (10);
\foreach \bfx/\bfy in {2/5,6/1,12/3,16/13,14/4,8/9}
\path[-] (\bfx) edge [blue,ultra thick] (\bfy);

\path[-] (13) edge [bend right=30,olive,ultra thick] (10);\path[-] (8) edge [bend right=30,olive,ultra thick] (14);
\foreach \bfx/\bfy in {3/5,12/2,16/11,15/1,6/7,4/9}
\path[-] (\bfx) edge [olive,ultra thick] (\bfy);

\path[-] (13) edge [bend left=30,orange,ultra thick] (7);\path[-] (12) edge [bend right=30,orange,ultra thick] (9);
\foreach \bfx/\bfy in {2/4,14/5,16/15,11/1,6/10,3/8}
\path[-] (\bfx) edge [orange,ultra thick] (\bfy);

\path[-] (14) edge [bend right=30,ultra thick] (6);\path[-] (15) edge [bend left=30,ultra thick] (9);
\foreach \bfx/\bfy in {3/13,12/16,2/11,4/1,5/10,7/8}
\path[-] (\bfx) edge [ultra thick] (\bfy);

\end{scope}
\node[draw=none] at (4.5,-6.5) {0000};
\node[draw=none] at (8,2) {0001};
\node[draw=none] at (0,8) {1001};
\node[draw=none] at (0.8,4.8) {1101};
\node[draw=none] at (-8,2) {1011};
\node[draw=none] at (-5,-6.5) {0100};
\node[draw=none] at (-3.5,-3.5) {1100};
\node[draw=none] at (3.5,-3.5) {0010};
\node[draw=none] at (0,-3.5) {0011};
\node[draw=none] at (0,-.5) {0111};
\node[draw=none] at (-3.7,-1) {1000};
\node[draw=none] at (3.7,-1) {0110};
\node[draw=none] at (4.5,1.9) {1110};
\node[draw=none] at (2.3,2.8) {1111};
\node[draw=none] at (-2.3,2.8) {0101};
\node[draw=none] at (-4.5,1.9) {1010};

\end{tikzpicture}}
\end{center}
\vspace*{.5in}\caption{The Clebsch graph $\Gamma$ (a folded cube): A 5-regular triangle-free graph on $n=16$ vertices obtained by
identifying each pair of opposite vertices in the 5-dimensional hypercube.
It is also a coset graph of the binary code $\{00000,11111\}$.} \label{fig:Clebsch}
\vspace*{.1in}\end{figure}
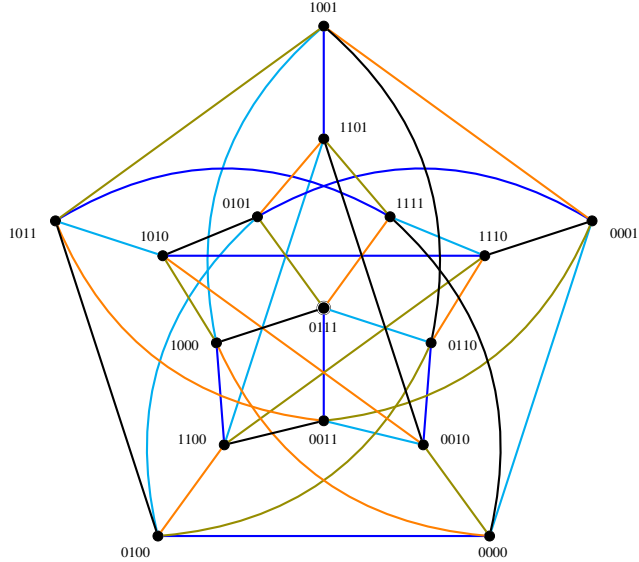

The Clebsch graph forms a rather special example: it is a unique strongly regular graph with the parameters $(16,5,0,2)$; see \cite{Godsil2001}, Theorem 10.6.4. Six other triangle-free strongly regular graphs are known \cite[p.~119]{Brouwer2012}, and all the other examples of (nonbinary) storage codes of rate greater than $\frac12$ in \cite{Cameron2016} are drawn from this list.

\subsection{Storage codes on Cayley graphs}\label{sec:Cayley}

Recall that the Cayley graph $\Cay(\cG,S)$ of the group $\cG$ for a given set of generators $S$ has $\cG$ as its set of vertices, and the vertices 
$g_1$ and $g_2$ are connected by an edge if there is an element $s\in S$ such that $g_1s=g_2.$ For the group
$\cG=\f_2^r$, which is the only group we will consider, any subset $S$ coincides with its inverse $S^{-1},$ so the graphs that we study are undirected. Since the generators are $r$-dimensional binary vectors, we also write the group action additively. Write the elements of $S$ as columns of an $r\times n$ matrix $H$ and consider the binary code $C$ defined by $H$ as the parity-check matrix. The graph
$\Cay(\cG,S)$ can be also viewed as a {\em coset graph} of the code $C$ constructed on the set of cosets in $\f_2^n/C$,
wherein two cosets are connected with an edge if and only if the Hamming distance between them (as subsets) is one.

Observe that the Clebsch graph $\Gamma$ can be viewed as a Cayley graph over the group $\f_2^4$. Namely,
consider the set $S$ whose elements form the matrix 
        \begin{equation}\label{eq:rc}
    H=\begin{bmatrix}1&0&0&0&1\\0&1&0&0&1\\0&0&1&0&1\\0&0&0&1&1\end{bmatrix}.
    \end{equation}
Indeed, as easily checked, the canonical generators $e_1,e_2,e_3,e_4$ connect a vertex $x\in \f_2^4$ to its neighbors at Hamming distance one, while their sum $e_1+\dots+e_4$ connects it to its opposite vertex. See Fig.~\ref{fig:Clebsch} for one possible vertex
labeling, where each color corresponds to the action of a specific generator. Viewing the matrix $H$ as a parity-check matrix of a $[5,1]$ binary repetition code, we see that $\Gamma$ is in fact the {\em coset graph} of this code. 

Motivated by this example, we now investigate other Cayley graphs over binary groups $\f_2^r$ and exhibit more exceptional graph families
that admit binary storage codes of rate greater than $1/2.$ These graphs have connections to both classical and quantum coding theory.

\subsubsection*{Notation} Our notation is as follows. We are given a binary code $C$ with parameters $[n,k,d]$ with a fixed parity-check matrix $H$
which we are free to choose. Next we construct the coset graph $G=\Cay(\f_2^r,S)$ where $r=n-k$ is the number of
rows and $S$ is the set of columns of $H$. Note that $G$ is a regular graph of degree $|S|=n.$ 
The adjacency matrix $A$ of this graph is symmetric, of order $N=2^r$, and its rows and columns are labeled by the vectors $x\in \f_2^r$: 
we would like row 
$x$ to specifiy the parity-check equation that recovers the value of the vertex $x$ from its neighbors. Note however that $A_{x,x}=0$, therefore, to involve
the value supported by $x$ in the parity-check equation, we consider, as in \eqref{eq:H}, the matrix
   \begin{equation}\label{eq:fH}
   {\tilde A}=I_N+A(G).
   \end{equation}
Our storage code is finally a linear space in $\f_2^N$ defined as ${\cC}=\ker({\tilde A})$\footnote{Notice that
we deal with two types of binary codes: the codes in $\f_2^n$ and codes in $\f_2^N,$ 
called small codes and big codes in \cite{Couvreur2013}.}. Note that the matrix ${\tilde A}$ is the adjacency
matrix of the graph $\Cay(\f_2^r,S)$ to which we have added self-loops at every vertex, or equivalently the Cayley graph
$\Cay(\f_2^r, S'),$ where $S'$ is obtained from $S$ by adding 0 to the set of generators. The main problem addressed below is analyzing
the dimension of ${\cC}$ both in general and for several specific constructions.
   
As our first observation, note that once the minimum distance of the code $C$ is at least 4, then so is the girth of the graph 
$\Cay(\f_2^r,S),$ i.e., the graph is triangle-free. We will therefore assume that all the small codes below have distance 4 or more.

The next lemma will help to further motivate our problem.
\begin{lemma}\label{lem:oddeven} Let $A$ be the adjacency matrix of the graph $\Cay(\f_2^r,S)$ where $S$ may or may not contain $0$. 
If $n=|S|$ is odd then $\rank A=N=2^r$ and if $n$ is even then $AA^\intercal=0$, implying in particular 
$\rank A\leq N/2.$ 
\end{lemma}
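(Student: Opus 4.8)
The key is to work over $\f_2$ and exploit the structure of $\tilde A$-type matrices for Cayley graphs on $\f_2^r$. Let $M$ be the adjacency matrix of $\Cay(\f_2^r,S)$, so $M_{x,y}=1$ iff $x+y\in S$. Since $S=-S$ over $\f_2$, $M$ is symmetric, and $M$ equals the $\f_2$-linear combination $\sum_{s\in S}P_s$, where $P_s$ is the permutation matrix of the translation $x\mapsto x+s$. The first observation I would record is that $P_sP_t=P_{s+t}$ and that the translations commute, so the matrix algebra generated by the $P_s$ over $\f_2$ is commutative; in particular $M$ commutes with every $P_s$. A cleaner route is the Fourier/character viewpoint: the characters $\chi_a(x)=(-1)^{\langle a,x\rangle}$, $a\in\f_2^r$, diagonalize $M$ over $\mathbb{Q}$ with eigenvalue $\lambda_a=\sum_{s\in S}(-1)^{\langle a,s\rangle}$. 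Note $\lambda_a\equiv n=|S|\pmod 2$ for every $a$, since each term is $\pm 1$.

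For the odd case: if $n$ is odd, then every eigenvalue $\lambda_a$ of $M$ over $\mathbb{Q}$ is odd, hence nonzero, so $\det M=\prod_a\lambda_a$ is an odd integer, hence $\det M$ is odd and therefore $\det M\not\equiv 0\pmod 2$. Thus $M$ has full rank over $\f_2$, i.e. $\rank_{\f_2} M = N = 2^r$. (Equivalently, one can argue directly: $\sum_{s\in S}P_s$ is a polynomial in the commuting involutions $P_{e_i}$, and its reduction mod $2$ is invertible because its ``constant term'' evaluated appropriately is $n\bmod 2=1$ — but the determinant argument is shortest.)

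For the even case: I want to show $MM^\intercal=0$ over $\f_2$ when $n$ is even. Since $M$ is symmetric, $MM^\intercal=M^2=\sum_{s,t\in S}P_sP_t=\sum_{s,t\in S}P_{s+t}$. Group the terms by the value $u=s+t\in\f_2^r$: the coefficient of $P_u$ is $|\{(s,t)\in S\times S: s+t=u\}| \bmod 2$. For $u=0$ this count is $|S|=n$, which is even, so that term vanishes. For $u\neq 0$, the pairs $(s,t)$ and $(t,s)$ with $s+t=u$ are distinct (as $u\neq 0$ forces $s\neq t$), so they come in pairs and the count is again even. Hence every coefficient is $0\bmod 2$ and $M^2=0$ over $\f_2$. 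Then $\rank M\le N/2$ follows from rank–nullity: $M^2=0$ means $\Ima M\subseteq\ker M$, so $\rank M\le N-\rank M$.

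Finally, the lemma is stated for the adjacency matrix of $\Cay(\f_2^r,S)$ where $S$ may or may not contain $0$; both arguments above used only that $S\subseteq\f_2^r$ with $S=-S$ (automatic) and the parity of $|S|$, so they apply verbatim whether or not $0\in S$ — including in particular to $\tilde A = I_N + A(G) = A(\Cay(\f_2^r, S\cup\{0\}))$ when $0\notin S$, which is the case relevant to the storage code construction. I expect no serious obstacle here; the only thing to be careful about is keeping straight that the parity claim ``$\lambda_a\equiv n$'' and the double-counting in the even case are the two places where $n\bmod 2$ enters, and that the even-case conclusion is an inequality (rank $\le N/2$), not an equality.
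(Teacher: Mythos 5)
Your proof is correct, and it differs from the paper's in one of the two cases. The paper handles both parities at once by computing $AA^\intercal$ entrywise: two distinct rows $x,y$ overlap in the positions $(x+S)\cap(y+S)$, which is partitioned into pairs $\{z,\,z+(x+y)\}$ and hence has even size, so all off-diagonal entries of $AA^\intercal$ vanish mod $2$, while each diagonal entry equals $|S|\bmod 2$; thus $AA^\intercal=I_N$ when $n$ is odd (giving full rank, and in fact $A^{-1}=A$ over $\f_2$) and $AA^\intercal=0$ when $n$ is even. Your even case is essentially this same pairing argument recast in the group algebra: writing $A=\sum_{s\in S}P_s$ and $A^2=\sum_{s,t}P_{s+t}$, the pairing $(s,t)\leftrightarrow(t,s)$ plays the role of the paper's pairs $\{z,z+(x+y)\}$, and the $u=0$ term carries the parity of $n$. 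Your odd case is genuinely different: you diagonalize $A$ over $\mathbb{Q}$ by characters, note every eigenvalue $\lambda_a=\sum_{s\in S}(-1)^{\langle a,s\rangle}$ is congruent to $n$ mod $2$, and conclude $\det A$ is odd, hence full rank over $\f_2$. This is valid, and the Fourier viewpoint has the side benefit of giving the actual spectrum; but it is a detour through characteristic zero and yields only invertibility, whereas the paper's (or your own group-algebra) computation gives the sharper identity $A^2=I_N$ in the odd case — indeed, in your expansion of $A^2$ the coefficient of $P_0$ is $n$ (odd) and all other coefficients are even, so the odd case would have followed for free from the same calculation you already did for the even case.
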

\begin{proof}
The rows and columns of $A$ are indexed by vectors of $\f_2^r$. Two distinct rows
$x$ and $y$ intersect in the set of positions $\cI=(x+S)\cap (y+S)$ which is of even size because
whenever $x+s_1=y+s_2$ is in $\cI$, so is $x+s_2=y+s_1$, implying that $\cI$ is
partitioned into pairs of the form $\{z,z+(x+y)\}$. Therefore the matrix
$AA^\intercal$ has only zeros outside the main diagonal. Now a diagonal element has
value $\langle x,x\rangle =|S| \bmod 2$ so when $n$ is odd we have
$AA^\intercal = I_N$ and when $n$ is even we have $AA^\intercal=0$, hence the
claims of the lemma. 
\end{proof}
As a consequence, we observe that for a set of non-zero generators of {\em odd size} $n$,  we have $\rank\tilde{A}\le N/2$, whence $\dim\ker(\tilde{A})\ge N/2$
so the rate of the storage code satisfies $R({\cC})\ge 1/2.$ It may happen that for some
Cayley graphs of odd degree (not counting the loops) we have $\rank {\tilde A}<N/2$ in which case we will obtain an
exceptional storage code of large rate. While this observation shows that this construction has a potential for uncovering large-size storage codes, it does not necessarily make finding such codes a straightforward task: in fact, as we have mentioned, for some years it was believed that they did not exist at all.

\subsection{Coset graphs of binary codes: the quantum coding connection}
\new{The construction of quantum codes of Calderbank, Shor, and Steane \cite{Calderbank1996,Steane96} relies 
on a pair of classical codes $\cC_0,\cC_1$ such that $\cC_0^\bot\subset \cC_1$, and it gives rise to a quantum code
of dimension equal to $\dim(\cC_1/\cC_0^\bot)$ and distance $\min(w(\cC_1\backslash \cC_0^\bot), w(\cC_0\backslash \cC_1^\bot))$, where $w(\cdot)$ is the minimum Hamming weight of the argument set. Of interest to us is the particular case of $\cC_0=\cC_1$. In other words, we start with a binary linear code $\cC\subset\f_2^N$ generated by a 
matrix $A$ such that $AA^\intercal=0$, which ensures that $\cC^\bot\subset \cC$. The code $\cC$ defines a quantum code} through the above construction, and the associated quantum code has parameters $[[N,N-2\rank{A},D]]$, i.e. length $N$, dimension $N-2\rank A$ and minimum distance $D$ equal to the smallest Hamming weight of a vector in $\cC^\perp\setminus \cC$. 

Therefore, for Cayley graphs over $\f_2^r$, the storage codes we are interested in also define quantum codes. From the point of view of storage, we do not have much use for the quantum code's minimum distance, but we are very much interested in its dimension: in particular we will obtain a storage code of rate greater than $1/2$ if and only if the associated quantum code has non-zero dimension. Quantum codes arising from Cayley graphs over $\f_2^r$ were studied in \cite{Couvreur2013} and we will make use of some of results obtained in that work.

We now focus on the case of coset graphs of repetition codes of odd length $n=r+1$, since they form a natural generalization of the Clebsch graph which is the coset graph of the $[5,1]$ repetition code. Alternatively, they can be seen as Cayley graphs $\Cay(\f_2^r,S)$, $r$ even, with generating set $S=\{e_1,e_2,\ldots ,e_r, e_1+e_2+\cdots +e_r\}$, 
where $e_1,\ldots ,e_r$ denote the canonical generators. Now it turns out that in the quantum coding context,
coset graphs $G_m$ of repetition codes of {\em even} length $m$ were studied in \cite{Couvreur2013}, because their adjacency matrices $A(G_m)$ are self-orthogonal and therefore directly yield quantum codes. These quantum codes were proved in \cite{Couvreur2013} to have parameters $
[[2^{m-1},2^{m/2},2^{m/2-1}]]$. We shall use this result to prove
\begin{proposition}\label{prop:Clebsch}
The storage codes associated to coset graphs of repetition codes of odd length $n\ge 5$ have length $N=2^{n-1}$ and dimension 
$K=2^{n-2}+2^{(n-3)/2}$. 
\end{proposition}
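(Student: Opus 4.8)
The plan is to relate the storage code on the coset graph of the odd-length repetition code to the quantum code on the coset graph of the even-length repetition code, for which the dimension is already known from \cite{Couvreur2013}. Let $n = r+1$ be odd, so $r$ is even, and let $G = \Cay(\f_2^r, S)$ with $S = \{e_1,\dots,e_r, e_1+\dots+e_r\}$. The storage code is $\cC = \ker(\tilde A)$ with $\tilde A = I_N + A(G)$ and $N = 2^r = 2^{n-1}$; its dimension is $N - \rank\tilde A$. By Lemma~\ref{lem:oddeven}, since $|S| = n$ is odd we only get $\rank\tilde A \le N/2$ in general, so the content of the proposition is the \emph{exact} value of $\rank\tilde A$, equivalently that $\dim\cC = 2^{n-2} + 2^{(n-3)/2}$, i.e. that the associated quantum code has dimension $N - 2\rank\tilde A = 2 \cdot 2^{(n-3)/2} = 2^{(n-1)/2}$.

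The key step is to connect $\tilde A$ (for odd $n$) to $A(G_m)$ (for even $m$). Note $\tilde A = I_N + A(G)$ is the adjacency matrix of $\Cay(\f_2^r, S')$ with $S' = S \cup \{0\}$, a set of $n+1 = r+2$ even-size generators. I would observe that $\Cay(\f_2^r, S')$ is closely related to the coset graph $G_m$ of the length-$m$ repetition code with $m = n+1 = r+2$ even: the latter is $\Cay(\f_2^{m-1}, \{e_1,\dots,e_{m-1}, e_1+\dots+e_{m-1}\})$ on $2^{m-1} = 2^{r+1} = 2N$ vertices. The right way to see the relationship is via the tensor/product structure: adding the generator $0$ to $S$ over $\f_2^r$, versus the structure of $G_m$ over $\f_2^{r+1}$. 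Concretely, I expect that $A(G_m)$, written in a suitable basis, decomposes in block form in terms of $\tilde A$ — for instance because the extra coordinate in $\f_2^{m-1}$ relative to $\f_2^r$ doubles the vertex set and the new generator $e_m$ together with the shifted all-ones generator interchange the two halves. From a block decomposition $A(G_m) \sim \begin{pmatrix} 0 & \tilde A \\ \tilde A & 0\end{pmatrix}$ (or a similar conjugate), one gets $\rank A(G_m) = 2\rank \tilde A$. Since \cite{Couvreur2013} gives the quantum code of $G_m$ parameters $[[2^{m-1}, 2^{m/2}, 2^{m/2-1}]]$, we have $2^{m-1} - 2\rank A(G_m) = 2^{m/2}$, hence $\rank A(G_m) = (2^{m-1} - 2^{m/2})/2 = 2^{m-2} - 2^{m/2 - 1}$. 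Substituting $m = n+1$ gives $\rank\tilde A = \tfrac12\rank A(G_m) = 2^{m-3} - 2^{m/2-2} = 2^{n-2} - 2^{(n-3)/2}$, and therefore $\dim\cC = N - \rank\tilde A = 2^{n-1} - 2^{n-2} + 2^{(n-3)/2} = 2^{n-2} + 2^{(n-3)/2}$, as claimed.

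The main obstacle is establishing the block relationship between $A(G_m)$ and $\tilde A$ rigorously — i.e. pinning down the exact isomorphism of Cayley graphs (or the exact change of basis on $\f_2^{m-1}$) that exhibits $A(G_m)$ as two copies of $\tilde A = I_N + A(G)$ glued antidiagonally. One has to be careful that the all-ones generator of length $m$ restricts correctly: writing $\f_2^{m-1} = \f_2^r \times \f_2$, the generators $e_1,\dots,e_r$ act on the first factor (giving $A(G)$ on each half), $e_m = e_{r+1}$ flips the second coordinate (giving an off-diagonal identity block), and the all-ones vector $e_1+\dots+e_m$ combines the shift by $e_1+\dots+e_r$ on the first factor with the flip on the second — so the off-diagonal blocks become $I_N + A(G) = \tilde A$ exactly while the diagonal blocks vanish. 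Once this identification is checked, everything else is bookkeeping with the parameters from \cite{Couvreur2013}; I would also double-check the small case $n=5$ (Clebsch), where the formula predicts $K = 2^3 + 2^1 = 10$, matching the known dimension of the Clebsch storage code and serving as a consistency check on the index arithmetic.
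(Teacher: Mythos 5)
Your overall plan is the paper's: exhibit the loop-augmented adjacency matrix $\tilde A$ of the odd-length coset graph inside the adjacency matrix $A(G_m)$ of the even-length coset graph ($m=n+1$), read $\rank A(G_m)=2^{m-2}-2^{m/2-1}$ off the $[[2^{m-1},2^{m/2},2^{m/2-1}]]$ parameters of \cite{Couvreur2013}, conclude $\rank\tilde A=\tfrac12\rank A(G_m)$, and the final bookkeeping is correct. The gap is precisely at the step you yourself flag as the main obstacle, and the mechanism you propose for it fails. If you split $\f_2^{m-1}=\f_2^r\times\f_2$ by the \emph{last coordinate}, the generators $e_1,\dots,e_r$ fix that coordinate and therefore land in the \emph{diagonal} blocks; writing $P_z$ for translation by $z$ on $\F(\f_2^r)$, one gets
\[
A(G_m)\;=\;\begin{pmatrix} \sum_{i=1}^r P_{e_i} & I_N+P_{\mathbf 1}\\[2pt] I_N+P_{\mathbf 1} & \sum_{i=1}^r P_{e_i}\end{pmatrix},
\]
so the diagonal blocks do not vanish (they are the $r$-cube adjacency) and the off-diagonal blocks are $I_N+P_{\mathbf 1}$, not $\tilde A=I_N+\sum_i P_{e_i}+P_{\mathbf 1}$. (This is also internally inconsistent with your parenthetical that $e_1,\dots,e_r$ give ``$A(G)$ on each half'' while ``the diagonal blocks vanish''.)

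The decomposition that does work — and is the one used in the paper — is the bipartition of $G_m$ by the \emph{Hamming-weight parity} of the vertex labels: since $m$ is even, all $m$ generators (the $e_i$ of weight $1$ and the all-ones vector of weight $m-1$) have odd weight, so $A(G_m)=\begin{pmatrix}0&B_m^\intercal\\ B_m&0\end{pmatrix}$ with $B_m$ indexed by even-weight rows and odd-weight columns. Equivalently, in your change-of-basis language, first apply the automorphism $(x,\epsilon)\mapsto(x,\epsilon+\langle\mathbf 1,x\rangle)$ of $\f_2^{r+1}$ and then split by the last coordinate. Even then $B_m$ is not literally $\tilde A$: you must relabel the column indices via $y\mapsto\mathbf 1+y$, after which $B_m$ becomes the (loop-including) adjacency matrix of the Cayley graph on the even-weight subgroup with generator multiset $\{0,\mathbf 1+e_1,\dots,\mathbf 1+e_n\}$; only because $n$ is odd does $\sum_{i=1}^{n}(\mathbf 1+e_i)=0$ hold, which identifies this graph with $\Cay(\f_2^{n-1},\{0,e_1,\dots,e_{n-1},e_1+\cdots+e_{n-1}\})$ and hence $B_m$ with $\tilde A$ up to row and column permutations. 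With that identification $\rank A(G_m)=2\rank B_m=2\rank\tilde A$, and your computation then yields $K=2^{n-2}+2^{(n-3)/2}$ as claimed; without it, the central claim of your proof is unproved, and the specific verification you sketch would not go through.
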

\begin{proof}
We start with the graphs $G_m$ defined in the previous paragraph for even $m$. They are Cayley graphs with vertex set 
$\f_2^{m-1}$, generator set $S_m=\{e_1,\ldots ,e_{m-1}, e_1+\ldots +e_{m-1}\},$ and we note that all these generators are of odd weight
since $m$ is even.
Therefore $G_m$ is bipartite, and each of its edges connects an even-weight vertex to an odd-weight vertex. 
The adjacency matrix of $G_m$ can therefore be written as
     \begin{equation*}
    {A}(G_m) = \begin{bmatrix}
       0 & B_m^\intercal\\
       B_m & 0
           \end{bmatrix}
     \end{equation*}
where the rows of $B_m$ are indexed by even-weight vectors of length $m-1,$ its columns by odd-weight vectors, and $B_m$ has a $1$ in the coordinate indexed by row $x$ and column $y$ if and only if the vector $x+y\in S_m.$ From the result of \cite{Couvreur2013} quoted before the proposition we find that $\rank A(G_m)=\frac 12(2^{m-1}-2^{m/2})=2^{m-2}-2^{\frac m2-1}.$

We will now make a transition to odd length $n=m-1$ by transforming the odd-weight vector indices of the columns of $B_m$ into even-weight vectors through the correspondence $x\mapsto \mathbf{1}+x$, where $\mathbf{1}$ denotes the all-one vector of length $n=m-1$. 
After permuting columns, the matrix $B_m$ can now be seen as the adjacency matrix of the Cayley graph defined over the binary group of even-weight vectors of length $m-1$, with generators $\mathbf{1}+e_1,\mathbf{1}+e_2,\ldots ,\mathbf{1}+e_{n}, 0$. Since $n$ is odd we have
\[
\sum_{i=1}^{n}(\mathbf{1}+e_i)=0
\]
and this last Cayley graph is therefore isomorphic to the Cayley graph over $\f_2^{n-1}$ with generators 
   $$
0,e_1,\ldots ,e_{n-1},e_1+\cdots +e_{n-1}.
   $$   
The matrix $B_m$ is therefore the parity-check matrix of the storage code ${\cC}$ associated to the coset graph of the repetition code of {\em odd} length $n=m-1$ (note that this matrix has a unit diagonal). The length of the code ${\cC}$ is $N=2^{n-1},$ and the dimension 
   \begin{align*}
   K&=\dim\ker(B_m)=N-\frac12\rank {A}(G_m)=2^{n-1}-\frac12(2^{n-1}-2^{\frac{n+1}2-1})\\
       &=2^{n-2}+2^{(n-3)/2}
    \end{align*}
as was to be proved.    
\end{proof}

This last proposition therefore yields an infinite family of graphs for which the associated storage codes have rate exceeding $1/2$.
In particular, for $n=5$ we recover the value of the rate $\frac58$. As $n$ increases, the rate
   $$
   R(G_m)=\frac 12+\frac1{2^{(n+1)/2}}
   $$
decreases to 1/2, so the highest rate within this family is achieved for $n=5$, i.e. for the Clebsch graph.   

\section{Necessary conditions for high rate of codes on coset graphs}\label{sec:necessary}
We maintain the notation and conventions of Sec.~\ref{sec:Cayley}. Below we identify binary vectors $z\in \f_2^{2^r}$ with functions
    \[
f: \f_2^r\rightarrow \f_2,
    \]
namely we put $f(x)=z_x,$ where $x\in \f_2^r$ is the (vector) index of the coordinates of $z.$
Therefore the adjacency matrix $A$ can be viewed as an operator on the function space $\F(\f_2^r):=\f_2^{\f_2^r}$ acting by
  \begin{eqnarray*}
A\colon \F(\f_2^r) & \rightarrow &  \F(\f_2^r) \\
            f   & \mapsto     & Af
\end{eqnarray*}
with 
\begin{equation}
\label{eq:adjacency}
Af(x) = \sum_{s\in S}f(x+s).
\end{equation} 
Keeping the graph $G=\Cay(\f_2^r,S)$ in mind, we also call $A$ the {\em adjacency operator} of $G$.

The starting observation in this part is given by the following proposition, due to Lowzow~\cite{lowzow}, as is point (a) of Theorem~\ref{cor:R>1/2} below, that was formulated in a quantum coding context.

\begin{proposition}
\label{prop:odd}
Let $V$ be a vector subspace of $\f_2^r$. If $|S\cap V|$ is odd, then
$\rank{A}\geq 2^{\dim V}$.
\end{proposition}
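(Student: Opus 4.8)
The plan is to restrict the adjacency operator $A$ to a well-chosen subspace of functions and show that this restriction is injective, which forces $\rank A$ to be at least the dimension of that subspace, namely $2^{\dim V}$. The natural candidate is the space $\F_V$ of functions $f\colon\f_2^r\to\f_2$ that are constant on the cosets of $V$, i.e. $f(x+v)=f(x)$ for all $v\in V$; this space has dimension $2^{r-\dim V}$, which is unfortunately the wrong size. So instead I would go the dual route: restrict $A$ to the space of functions supported on $V$ (or, equivalently after a translation, on a single coset of $V$), which has dimension exactly $2^{\dim V}$, and prove that $A$ is injective there. Concretely, let $W=\{f\in\F(\f_2^r): \supp f\subseteq V\}$, so $\dim W=2^{\dim V}$; I claim the composition of $A$ restricted to $W$ with the coordinate projection onto $V$ is invertible.

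The key computation is the following. For $f\in W$ and $x\in V$, using \eqref{eq:adjacency} we have $Af(x)=\sum_{s\in S}f(x+s)$, and since $f$ is supported on $V$, only those $s$ with $x+s\in V$, i.e. $s\in V$ (because $x\in V$), contribute. Hence for $x\in V$,
\[
Af(x)=\sum_{s\in S\cap V}f(x+s)=(A_V f)(x),
\]
where $A_V$ is the adjacency operator (with the same convention) of the Cayley graph $\Cay(V,S\cap V)$ acting on $\F(V)$. Now I invoke Lemma~\ref{lem:oddeven} applied to the group $V$ and generator set $S\cap V$: since $|S\cap V|$ is odd, $A_V$ has full rank $2^{\dim V}$, i.e. $A_V$ is invertible on $\F(V)$. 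Therefore the map $f\mapsto (Af(x))_{x\in V}$ is a bijection from $W$ onto $\F(V)$; in particular $Af\ne 0$ for every nonzero $f\in W$, so $A$ restricted to $W$ is injective, giving $\rank A\ge\dim W=2^{\dim V}$.

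I expect the main subtlety — rather than a genuine obstacle — to be bookkeeping around whether $S$ contains $0$ and around the "unit diagonal" convention: Lemma~\ref{lem:oddeven} is stated for $\Cay(\f_2^r,S)$ with $S$ possibly containing $0$, and it is exactly the odd-cardinality case that yields full rank, so the hypothesis $|S\cap V|$ odd is used precisely here and nowhere else. One should also double-check that $S\cap V$ is a legitimate generator set for a Cayley graph on $V$ (it is a subset of $V$, closed under inverse since every element is its own inverse in $\f_2^r$), and that the adjacency operator of $\Cay(V,S\cap V)$ is literally the $V\times V$ submatrix of $A$ indexed by $V$ — both are immediate from \eqref{eq:adjacency}. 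No translation of $V$ is actually needed since $0\in V$ always; the argument localizes cleanly on the subgroup $V$ itself.
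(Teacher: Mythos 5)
Your proposal is correct and follows essentially the same route as the paper: restrict $A$ to functions supported on $V$, observe that on the coordinates indexed by $V$ it acts as the adjacency operator of $\Cay(V,S\cap V)$, and apply Lemma~\ref{lem:oddeven} in the odd case to get full rank $2^{\dim V}$. Your phrasing via injectivity of $A$ on that subspace is just a minor repackaging of the paper's argument about the image dimension, and your bookkeeping remarks (about $0\in S$ and $S\cap V$ being a valid generator set) are accurate.
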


\begin{proof}
Let us look at the effect of $A$ on the restricted set of functions $F_V:=\{f: \supp(f)\subset V\}.$ 
We note that for every $x\in V$ we have
\[
Af(x) = \sum_{s\in S\cap V}f(x+s)
\]
since for $x\in V$ and $s\not\in V$ we have $x+s\not\in V$ and $f(x+s)=0$ by our
hypothesis on $f$. Therefore, when thus restricted to functions $V\to \f_2$, 
$A$ acts on $F_V$ as the adjacency operator $A_{S\cap V}$ of the Cayley graph
defined by the set of generators $S\cap V$. Therefore the image of $A$ has
dimension at least equal to the dimension of the image of $A_{S\cap V}$. Now,
Lemma~\ref{lem:oddeven} says that when $|S\cap V|$ is odd the image of $A_{S\cap
V}$ has dimension $2^{\dim V}$, since $A_{S\cap V}$ is full-rank on the space of
functions $V\to\f_2$.
\end{proof}

\begin{definition} The {\em Schur product} of two codes $A,B\subset \f_2^n$ is a
binary linear code $C=A\ast B\subset \f_2^n$ generated by all coordinatewise products $a\ast b=(a_1b_1,\ldots,a_nb_n), a=(a_1,\ldots ,a_n)\in A,b=(b_1,\ldots,b_n)\in B.$
\end{definition}

\begin{theorem}\label{cor:R>1/2}
Let $G=\Cay(\f_2^r,S)$ be a Cayley graph over $\f_2^r$ with set of generators $S$ containing
the $0$ element. Let $H$ be the $r\times n$ matrix whose columns are made up of
the non-zero elements of $S$, so that $n=|S|-1$, and let $C$ be a code of length $n$
with parity-check matrix $H$. Finally, let ${\cC}$ be the 
storage code on $G$. Then

(a) If $R({\cC})> 1/2,$ then $n$ is odd and all the rows of $H$ have even weight;

(b) If $R({\cC})> (2^{k}-1)/2^{k},k=2,3,\dots,$ then $(C^\bot)^{\ast (k-1)}\subset C.$
\end{theorem}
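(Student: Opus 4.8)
The plan is to establish both parts by contraposition, the engine in each case being Proposition~\ref{prop:odd}: exhibiting a subspace $V\subseteq\f_2^r$ with $|S\cap V|$ odd at once yields $\rank A\ge 2^{\dim V}$, whereas the rate hypothesis bounds $\rank A=N-\dim\cC$ ($N=2^r$) strictly below $2^{\,r-1}=N/2$ in (a) and strictly below $2^{\,r-k}$ in (b). Throughout I write $h^{(1)},\dots,h^{(r)}$ for the rows of $H$ and, for $w=\sum_i w_ie_i\in\f_2^r$, set $c_w:=\sum_i w_ih^{(i)}$; as $w$ ranges over $\f_2^r$ the vector $c_w$ ranges over $C^\bot$, and $(c_w)_j=\langle h_j,w\rangle$ for the $j$-th column $h_j$ of $H$. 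Because the support of a Schur product is the intersection of the supports, $\mathrm{wt}(c_{w_1}\ast\cdots\ast c_{w_t})=|\{j:\langle h_j,w_\ell\rangle=1\text{ for all }\ell\}|$, an identity I use repeatedly.

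For (a): if $n$ were even, then $|S|=n+1$ would be odd and Lemma~\ref{lem:oddeven} would force $\rank A=N$, i.e.\ $\cC=\{0\}$, contradicting $R(\cC)>1/2$; so $n$ is odd. For the row weights, fix a coordinate $i$ and apply Proposition~\ref{prop:odd} to the hyperplane $V_i=\{x:x_i=0\}$: since $0\in S$ lies in $V_i$ and the non-zero members of $S\cap V_i$ are exactly the columns of $H$ with a $0$ in row $i$, one has $|S\cap V_i|=1+\bigl(n-\mathrm{wt}(h^{(i)})\bigr)$, which (as $n$ is odd) is odd whenever $\mathrm{wt}(h^{(i)})$ is odd; that would force $\rank A\ge 2^{\,r-1}=N/2$, against $R(\cC)>1/2$. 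Hence every row of $H$ has even weight.

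Part (b) runs the same argument ``at codimension $k$'' and has two moving parts. First, a reformulation: since $C=(C^\bot)^\bot$ and $\langle u,v\rangle=\mathrm{wt}(u\ast v)\bmod 2$, the containment $(C^\bot)^{\ast(k-1)}\subseteq C$ is equivalent to every $k$-fold Schur product $c_{w_0}\ast c_{w_1}\ast\cdots\ast c_{w_{k-1}}$ of elements of $C^\bot$ having even weight. Assume this fails; by the identity above there are $w_0,\dots,w_{k-1}$ for which $n_{\mathbf 1}:=|\{j:\langle h_j,w_\ell\rangle=1,\ \ell=0,\dots,k-1\}|$ is odd. Then each $w_\ell\neq 0$, and the affine set $\kappa:=\{x\in\f_2^r:\langle x,w_\ell\rangle=1\text{ for all }\ell\}$ is non-empty, being a coset of $W^\bot$ where $W:=\langle w_0,\dots,w_{k-1}\rangle$, $\dim W\le k$; and since $0\notin\kappa$, $|S\cap\kappa|=n_{\mathbf 1}$ is odd. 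Now the crucial move: choose a column $h_{j_0}\in\kappa$ (possible as $n_{\mathbf 1}\ge 1$) and replace $S$ by $S':=S+h_{j_0}\ni 0$. The adjacency matrix of $\Cay(\f_2^r,S')$ is a column permutation of that of $\Cay(\f_2^r,S)$ — indeed $(A')_{x,y}=A_{x,\,y+h_{j_0}}$ — so $\rank A'=\rank A$; and, using $W^\bot+h_{j_0}=\kappa$, one checks $S'\cap W^\bot=\{s+h_{j_0}:s\in S\cap\kappa\}$, so $|S'\cap W^\bot|=n_{\mathbf 1}$ is odd. Applying Proposition~\ref{prop:odd} (whose proof is valid verbatim for any Cayley graph over $\f_2^r$) to $\Cay(\f_2^r,S')$ with the subspace $W^\bot$ then gives $\rank A=\rank A'\ge 2^{\dim W^\bot}=2^{\,r-\dim W}\ge 2^{\,r-k}$, contradicting $R(\cC)>(2^{k}-1)/2^{k}$.

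I expect the main obstacle to be exactly this last move. Proposition~\ref{prop:odd} only detects oddness over linear subspaces through the origin, whereas a non-trivial Schur product produces oddness over the affine coset $\kappa$, which necessarily avoids $0$ because each of its defining equations reads $\langle x,w_\ell\rangle=1$. The way around it is the observation that translating the connection set of a Cayley graph over $\f_2^r$ merely permutes the columns of the adjacency matrix — hence preserves its rank — while simultaneously carrying $\kappa$ onto the genuine subspace $W^\bot$; recognising this manoeuvre is what renders Proposition~\ref{prop:odd} usable. Everything else — the bookkeeping with the $c_w$, the inclusion–exclusion hidden in ``supports of Schur products intersect'', and the arithmetic with $N=2^r$ — is routine.
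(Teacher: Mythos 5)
Your proof is correct. Part (a) coincides with the paper's argument (odd $|S|$ forces full rank via Lemma~\ref{lem:oddeven}; the hyperplane $V_i=\{x:x_i=0\}$ plus Proposition~\ref{prop:odd} forces even row weights). Part (b), however, takes a genuinely different route. The paper argues by induction on $k$, applying Proposition~\ref{prop:odd} only to coordinate subspaces spanned by standard basis vectors, which yields parity conditions on unions of supports of the \emph{rows} of $H$; inclusion--exclusion together with the lower-order cases then gives even $k$-wise intersections, i.e.\ orthogonality of $(k-1)$-fold Schur products of rows against rows, and one concludes on generators. You instead work directly with arbitrary codewords $c_{w_0},\dots,c_{w_{k-1}}\in C^\bot$, reformulate the containment $(C^\bot)^{\ast(k-1)}\subset C$ as ``every $k$-fold Schur product has even weight,'' and convert a putative odd-weight product into an odd intersection $|S\cap\kappa|$ with an \emph{affine} coset $\kappa$ of $W^\bot$; the key extra ingredient is your observation that translating the connection set, $S\mapsto S+h_{j_0}$, permutes the columns of the adjacency matrix (so preserves its rank) while carrying $\kappa$ onto the honest subspace $W^\bot$, after which a single application of Proposition~\ref{prop:odd} gives $\rank A\ge 2^{r-\dim W}\ge 2^{r-k}$, contradicting the rate hypothesis. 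In effect you prove an affine-coset strengthening of Proposition~\ref{prop:odd}, which makes the argument non-inductive, verifies the orthogonality for all of $C^\bot$ at once (no reduction to rows or repeated indices), and is reusable elsewhere; the paper's route is more elementary bookkeeping with row supports but leans on part (a) and the lower cases at each step. All the steps you flag as delicate (non-emptiness of $\kappa$, $0\notin\kappa$, $|S\cap\kappa|$ counting only columns, validity of Proposition~\ref{prop:odd} for a generator set containing $0$) check out.
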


\begin{proof}
(a) Lemma~\ref{lem:oddeven} implies that $|S|$ is even and therefore $n$ is odd.
Let $V_i$ be the subspace of $\f_2^r$ generated by all vectors of weight $1$
except the vector with a $1$ in coordinate~$i$. We have $\dim V_i=r-1$. Our assumption of
$R({\cC}) > 1/2$ implies that $\rank A(G)<2^{r-1},$ and then by
Proposition~\ref{prop:odd}, $|S\cap V_i|$ must be even, and so must therefore be
$|S\cap \overline{V_i}|$. But $S\cap\overline{V_i}$ is exactly the set of
columns of $H$ that have a $1$ in row $i$.

(b) We argue by induction. For the base case of $k=2$ we need to show that $C^\bot\subset C.$
Let $i,j$ be distinct (row) indices, $1\leq i,j\leq r.$ Let $V$ be the subspace
of $\f_2^r$ generated by the basis vectors $e_s$ except for $e_i,e_j.$ As above, Proposition~\ref{prop:odd}
implies that $|S\cap \overline{V}|$ is even. In other words, the set of columns of $H$
having a 1 in rows $h_i$ or $h_j$ is even, i.e., $|h_i\cup h_j|$ is even. 
Since the weights of the rows $h_i$ and $h_j$ are also even by Part (a), this implies
that $|h_i\cap h_j|$ is even, i.e., $\langle h_i,h_j\rangle=0.$ The row space of $H$ is
therefore included in its orthogonal space, proving the claim.

We will prove the induction step for $k=3$ to ease notation.
Take three distinct row indices $i_1,i_2,i_3$ and let $V$ be generated by all the basis
vectors except $e_{i_j}, j=1,2,3.$ As before, $|S\cap \overline V|$ is even,
or in other words, $|\cup_{j=1}^3 h_{i_j}|$ is even. Now $|h_{i_1}\cap h_{i_2}|$ and
$|h_{i_1}\cap h_{i_3}|$ are even by the base case, and therefore $|h_{i_1}\cap h_{i_2}\cap h_{i_3}|$
is also even, i.e., $(h_{i_1}\ast h_{i_2}, h_{i_3})=0.$  This shows that $C^\bot$ is orthogonal
to $C^\bot\ast C^\bot,$ meaning that $C^\bot\ast C^\bot\subset (C^\bot)^\bot=C.$
For general $k$ we just use the fact that the $l$-wise intersections of the rows are even for all $1\le l\le k-1,$
and thus so is the $k$-wise intersection.
\end{proof}

\remove{\begin{corollary}\label{lem:R>3/4}
Under the same hypotheses as Lemma~\ref{cor:R>1/2}, if we suppose furthermore
that the storage code has rate $R>3/4$, then the code $C$ defined by the
parity-check matrix $H$ must contain its dual code $C^\perp$.
\end{corollary}

\begin{proof}
Let $i,j$ be distinct indices, $1\leq i,j\leq r$. Let $V_{ij}$ be the subspace of
$\f_2^r$ generated by the vectors of weight $1$, except for the two vectors
whose supports are $\{i\}$ and $\{j\}$. Proposition~\ref{prop:odd} implies that
that the set of columns of $H$ not in $V_{ij}$ is of even cardinality. In other
words, the set of columns of $H$ having a $1$ in rows $i$ or $j$ is of even
cardinality. So the union of the supports of rows $i$ is $j$ is of even
cardinality, and since
rows $i$ are $j$ are of even weight by Corollary~\ref{cor:R>1/2}, this implies
that the two supports intersect in an even number of positions. Equivalently,
this means that row $i$ and row $j$ are orthogonal. .
\end{proof}

\begin{corollary}
Under the same hypotheses as Lemma~\ref{cor:R>1/2}, if we suppose furthermore
that the storage code has rate $R>7/8$, then the code $C$ defined by the
parity-check matrix $H$ must contain $C^\perp*C^\perp$, where, for two codes
$A,B$, $A*B$ denotes the
binary linear code generated by all coordinatewise products $a*b$, $a\in A$,
$b\in B$.
\end{corollary}

\begin{proof}
This time we take three different indices $1\leq i_1,i_2,i_3\leq r$ and define
the subspace $V$ of $\f_2^r$ generated by all vectors of weight $1$ that do not
have supports $\{i_1\},\{i_2\},\{i_3\}$. 
Proposition~\ref{prop:odd} implies that
that the set of columns of $H$ not in $V$ is of even cardinality, which
translates into the union of the supports of rows $i_1,i_2,i_3$ being of even
cardinality. Together with the fact that row $i_3$ must intersect row $i_1$ in
an even number of positions and must also intersect row $i_2$ in an even number
of positions (by Lemma~\ref{lem:R>3/4}), this implies that the intersection of
the three rows must be even. Denoting rows $i_1,i_2,i_3$ by vectors $x,y,z$,
this translates into the orthogonality of $z$ and $x*y$. So any vector of
$C^\perp$ must be orthogonal to any vector of $C^\perp*C^\perp$, meaning
$C^\perp*C^\perp\subset (C^\perp)^\perp=C$.
\end{proof}}

\section{A family of storage codes on triangle-free graphs of rate \texorpdfstring{$3/4$}{}}\label{sec:3/4}
In this section we consider a family of storage codes on coset graphs of a specially
constructed family of binary codes $C_r$ of length $n=2^{r-1}+1$, dimension $k=2^{r-1}-r,$ and distance 4.
To define it, start with the parity-check matrix of the extended Hamming code of
length $2^{r-1}$, augment it with an all-zero column, and then add a row of weight 2
that contains a `1' in the last position. Denote the resulting $(r+1)\times n$ matrix by $H_r.$ For instance, for
$r=4$ we obtain the matrix
 \begin{equation}\label{eq:MH} 
     H_4=\left[\begin{array}{*{8}{c@{\hskip6pt}}|c}
     0&0&0&0&1&1&1&1&0\\
     0&0&1&1&0&0&1&1&0\\
     0&1&0&1&0&1&0&1&0\\
     1&1&1&1&1&1&1&1&0\\
     \cline{1-8}
     0&0&0&0&0&0&0&1&1
     \end{array}\right]
  \end{equation}
where the choice of the last row is largely arbitrary as long as it is of weight 2 and intersects the
all-zero column.  

Observe that the code $C_r$ does not contain its dual $C_r^\bot,$ i.e., the code generated by $H_r,$
because for instance the last row of $H_r$ is not orthogonal to the other rows. Thus, from Theorem~\ref{cor:R>1/2}, the most we can hope
for the storage code constructed on the coset graph of $C_r$ is rate $3/4.$ In this section we prove that 
the rate $R(\cC_r)$ is in fact close to this maximum value.

\begin{theorem} \label{thm:3/4}
Let ${\cC}_r$ be the $[N=2^{r+1},K]$ storage code constructed on the coset
graph of the code $C_r,r\geq 4.$ Then
    $$
    \frac KN=\frac 34-\frac1{2^r}.
    $$
\end{theorem}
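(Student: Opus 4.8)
The plan is to compute the $\f_2$‑rank of the parity‑check matrix $\tilde A=I_N+A(G)$ of $\cC_r$, where $G$ is the coset graph of $C_r$; since $\dim\cC_r=N-\rank\tilde A$ and $N=2^{r+1}$, the assertion $K/N=\tfrac34-2^{-r}$ is equivalent to $\rank\tilde A=2^{r-1}+2$. Note that $0\notin S$ (the augmented all‑zero column of the Hamming part becomes the generator $(\mathbf 0,\dots,0,1)$ once the weight‑$2$ row is adjoined) and that $|S|=n$ is odd, so Lemma~\ref{lem:oddeven} applied to $S\cup\{0\}$ already yields $\rank\tilde A\le N/2=2^r$; the task is to pin down the exact value. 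The strategy is to exhibit an explicit block decomposition of $\tilde A$ using the product structure $\f_2^{r+1}=\f_2^{r-1}\times\f_2^{2}$, and then reduce $\tilde A$ by elementary $\f_2$‑block operations to a matrix whose rank can be read off.

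First I would index a vertex of $G$ as $(\mathbf y,a,b)$ with $\mathbf y\in\f_2^{r-1}$ and $a,b\in\f_2$. The key structural input is that the coset graph of the extended Hamming code of length $2^{r-1}$ is the complete bipartite graph $K_{2^{r-1},2^{r-1}}$: its parity‑check matrix has column set exactly $\f_2^{r-1}\times\{1\}$. Consequently the columns of $H_r$ form the set $S=\{(\mathbf v,1,0):\mathbf v\in\f_2^{r-1},\ \mathbf v\ne\mathbf v^*\}\cup\{(\mathbf v^*,1,1),\ (\mathbf 0,0,1)\}$, where $\mathbf v^*\in\f_2^{r-1}$ is determined by the weight‑$2$ row. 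Translating adjacency into this coordinatization and adding the diagonal from $I_N$, the matrix $\tilde A$ becomes the $4\times4$ array of $2^{r-1}\times2^{r-1}$ blocks, indexed by $(a,b)$ in the order $00,01,10,11$,
\[
\tilde A=\begin{pmatrix} I & I & J+M & M\\ I & I & M & J+M\\ J+M & M & I & I\\ M & J+M & I & I\end{pmatrix},
\]
where $J$ is the all‑ones matrix and $M$ is the permutation matrix of $\mathbf y\mapsto\mathbf y+\mathbf v^*$. The only algebra used is $M^2=I$, $JM=MJ=J$, and $J^2=2^{r-1}J=0$ over $\f_2$ (so e.g.\ $(J+M)^2=I$).

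Then I would reduce: replace the second block‑row by $R_2+R_1$, the third by $R_3+(J+M)R_1$, and the fourth by $R_4+MR_1$; this clears the first block‑column below the top‑left entry and turns the remaining $3\times3$ part into a matrix all of whose blocks lie in $\{0,J\}$. Clearing the first block‑row by the matching block‑column operations then eliminates every occurrence of $M$ (which only ever appears in the first block‑row and block‑column), reducing $\tilde A$ to $\diag(I,\,Q)$ with
\[
Q=\begin{pmatrix} 0 & J & J\\ J & 0 & J\\ J & J & 0\end{pmatrix}=J\otimes\begin{pmatrix} 0&1&1\\1&0&1\\1&1&0\end{pmatrix}.
\]
Hence $\rank_{\f_2}Q=\rank J\cdot 2=2$, since the $3\times3$ factor has rank $2$ over $\f_2$ (its three rows sum to zero, but any two are independent). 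Therefore $\rank\tilde A=2^{r-1}+2$, so $K=2^{r+1}-2^{r-1}-2$ and $K/N=\tfrac34-2^{-r}$, as claimed.

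I expect the main obstacle to be the first step—correctly establishing the block form of $\tilde A$—since it requires tracking how adjoining a row to a parity‑check matrix lifts the coset graph to a double cover, and keeping straight the three kinds of off‑diagonal block arising respectively from the generators $(\mathbf 0,0,1)$, $(\mathbf v^*,1,1)$, and the $2^{r-1}-1$ generators $(\mathbf v,1,0)$ with $\mathbf v\ne\mathbf v^*$. Once the block form is in hand the reduction is mechanical, and in particular it shows that the answer is independent of $\mathbf v^*$, i.e.\ of the precise choice of the last row of $H_r$, consistent with the remark following \eqref{eq:MH}. Finally, as stated in the construction but not proved here, one checks separately that $C_r$ has minimum distance $4$ (so that $G$ is triangle‑free), which follows from the minimum distance of the extended Hamming code together with the observation that the added weight‑$2$ row produces no codeword of weight $\le 3$.
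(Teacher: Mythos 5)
Your proof is correct, but it takes a genuinely different route from the paper's. The paper first establishes a general lemma (Lemma~\ref{lemma:exact}) for a one-coordinate lift of the generator set, expressing $\dim\ker\tilde{A}$ as $\dim\ker A+\dim(\ker A_1\cap\ker A)+\dim(\Ima A\cap\Ima A_{0|\ker A})$ for the projected operators $A=A_0+A_1$, and then evaluates the three terms ($2^r-2$, $2^{r-1}-1$, and $1$) using the fact that the coset graph of the extended Hamming code is complete bipartite, giving $K=2^r+2^{r-1}-2$. You instead split off the last \emph{two} coordinates, write $\tilde{A}$ explicitly as a $4\times4$ array of $2^{r-1}\times 2^{r-1}$ blocks built from $I$, $J$, and the translation matrix $M$, and compute $\rank\tilde{A}=2^{r-1}+2$ by block elimination over $\f_2$; I checked your block form, the identities $M^2=I$, $JM=MJ=J$, $J^2=0$ (valid since $2^{r-1}$ is even), the row and column reductions, and $\rank Q=2$, and indeed $N-\rank\tilde{A}=2^r+2^{r-1}-2$ matches the paper. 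Both arguments hinge on the same structural input, namely that the columns of the extended Hamming parity-check matrix are exactly $\f_2^{r-1}\times\{1\}$, so the sum of the corresponding translations is $J$; but your computation is more elementary and self-contained, produces the rank of $\tilde{A}$ directly, and makes transparent that the answer is independent of the choice of the weight-2 row (any $\mathbf{v}^*$, including the degenerate case $M=I$, works identically), whereas the paper's route isolates a reusable tool (Lemma~\ref{lemma:exact}) applicable to any one-coordinate extension of a generator set, which would adapt more readily to other base codes where no clean closed block form in $I,J,M$ is available. Two cosmetic points: with blocks indexed by $(a,b)$ your $Q$ is the Kronecker product of the $3\times3$ all-ones-minus-identity matrix with $J$ (the order of factors in your tensor notation is reversed, which of course does not affect the rank), and, as you yourself note, the triangle-freeness of the coset graph (minimum distance $4$ of $C_r$) is a separate easy check that is not part of the dimension claim of Theorem~\ref{thm:3/4}.
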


The idea behind the construction is that the kernel of the adjacency operator associated to the coset 
graph of the extended Hamming code is easily described, and that the matrix \eqref{eq:MH} is essentially obtained by adding an extra row to the parity-check matrix of the extended Hamming code: we can therefore hope to compute the dimension of the kernel of the adjacency operator associated to this new code from the adjacency operator for the coset graph of the extended Hamming code. Now the latter graph
is simply a complete bipartite graph whose adjacency matrix is
   $$
   A=\begin{bmatrix}0&J\\J&0\end{bmatrix}
   $$
where $J$ is the all-one matrix of order $2^{r-1}.$ \new{This is evident once we realize that the elements of $\f_2^r$ of the form $(\ast,\dots,\ast,0)$ are connected to all the elements $(\ast,\dots,\ast,1)$, while at the same time
having no edges among themselves.} The kernel of $A$ is formed of vectors of length $2^r$
of even weight in the first half as well as in the second half, and therefore has dimension $2^r-2.$ 
To switch from the coset graph of the extended Hamming code to the coset graph of the code $C_r$ we need to understand the effect on the adjacency operators of adding an extra coordinate to a set of generators.

Lemma~\ref{lemma:exact} that we derive below will go some way towards giving a formula for deriving the new dimension of the kernel of the adjacency operator when an extra coordinate is added to the set of generators. To introduce it we need some additional notation.

Let $\tilde{S}$ be any subset of $\f_2^{r+1}$, and let $\tilde{A}$ be the adjacency operator~\eqref{eq:adjacency} on the space of functions $\f_2^{r+1}\to\f_2$, associated to the Cayley graph $\Cay(\f_2^{r+1},\tilde{S})$. 
Let $\pi\colon \f_2^{r+1}\to\f_2^r$ be the projection on the first $r$ coordinates, i.e. $\pi(x)$ is obtained from the vector $x$ by removing its last coordinate.
Now let $S=\pi(\tilde{S})$. Some caution is in order here, because it may happen that two distinct elements of $\tilde{S}$ project onto the same vector if they differ only in their last coordinate. We consider $S$ as a multiset, i.e. we do not collapse into one element the images by $\pi$ of two distinct elements if they happen to have the same projection, therefore we have in particular $|S|=|\tilde{S}|$. Now it may make little sense to consider the Cayley graph over $\f_2^r$ with a generator set $S$ that contains duplicate elements, but we can nevertheless define the operator $A$ on the space of functions $\f_2^r\to\f_2$, associated to $S$ through the formula~\eqref{eq:adjacency}. We note in passing that $A$ is equal to the adjacency operator of the Cayley graph over $\f_2^r$ with the generator set obtained from $S$ by removing duplicate elements.

Next, for $i=0,1,$ let $\tilde{S}_i$ denote the subset consisting of the elements of $\tilde{S}$ with a $i$ in the last coordinate, and let $S_0=\pi(\tilde{S_0})$ and $S_1=\pi(\tilde{S_1})$. Let $A_0$ and $A_1$ be the associated adjacency operators, i.e. the adjacency operators of the Cayley graphs $\Cay(\f_2^r,S_0)$ and $\Cay(\f_2^r,S_1)$.
The operators $A_0$ and $A_1$ act therefore on the space of functions $\f_2^r\to\f_2$.
Note that we have the decomposition:
\[
A = A_0 + A_1.
\]

 Let $e$ denote the vector of weight $1$ in $\f_2^{r+1}$ supported by the last
coordinate,
and let $V$ denote the subspace of $\f_2^{r+1}$ supported by the $r$ old coordinates, 
isomorphic therefore to $\f_2^r$. Recall that $\cF(\f_2^r)$ denotes the space of functions
$\f_2^{\f_2^r}$. Since $\cF(\f_2^{r+1})$ is formed of two copies of $\cF(\f_2^r),$ we have the isomorphism
\begin{eqnarray}
\label{eq:iso}
\F(V)\times\F(V) &\isomto& \F(\f_2^{r+1})\\
  (\one_X,\one_Y)&\mapsto& \one_X+\one_{Y+e}\nonumber
\end{eqnarray}
where $X$ is the support of the first half of the function (vector) and $Y$ the support of its second half.
Now we may identify any function $\tilde{f}\in\F(\f_2^{r+1})$ with a pair of functions $(f_0,f_1)$
through the above isomorphism. Spelling it out:
\begin{align*}
f_0\colon V&\to\f_2\\
          x&\mapsto f_0(x)=\tilde{f}(x)\\
f_1\colon V&\to\f_2\\
          x&\mapsto f_1(x)=\tilde{f}(x+e)
\end{align*}
The matrix $\tilde A$ has the form $\tilde A=\begin{bmatrix}A_0|A_1\\\hline A_1|A_0\end{bmatrix},$ and so we have:
\begin{align}
(\tilde{A}\tilde{f})_0 &= A_0f_0 + A_1f_1\label{eq:0}\\
(\tilde{A}\tilde{f})_1 &= A_0f_1 + A_1f_0\label{eq:1}
\end{align}


We can now derive:

\begin{lemma}\label{lemma:exact}  The adjacency operator $\tilde A$ associated to the set of generators $\tilde S$ satisfies
    \begin{equation}\label{eq:exact}
    \dim\ker\tilde{A} = \dim\ker A +\dim(\ker A_1\cap\ker A) + \dim(\Ima A\cap\Ima
A_{0|\ker A}), 
    \end{equation}
where $ A_{0|\ker A}$ is the restriction of $A_0$ to the kernel of $A$.   
\end{lemma}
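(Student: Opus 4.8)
The plan is to analyze the kernel of $\tilde A$ through the decomposition $\tilde f\leftrightarrow (f_0,f_1)$ and the equations \eqref{eq:0}--\eqref{eq:1}. So $\tilde f\in\ker\tilde A$ if and only if $A_0f_0+A_1f_1=0$ and $A_0f_1+A_1f_0=0$. Adding and subtracting these two equations (we are in characteristic $2$, so adding twice) gives the equivalent system $A(f_0+f_1)=0$ and $(A_0+A_1)(f_0)=A_0f_0+A_1f_0$ — but the cleanest substitution is to set $g=f_0+f_1$ and keep $f_0$. Then the system becomes $Ag=0$ together with $A_0 f_0 + A_1(f_0+g)=0$, i.e. $Ag=0$ and $A f_0 = A_1 g$ (using $A=A_0+A_1$, so $A_0f_0+A_1f_0+A_1g = Af_0+A_1g$; wait — recompute: from \eqref{eq:0}, $A_0f_0+A_1f_1=0$ with $f_1=f_0+g$ gives $A_0f_0+A_1f_0+A_1g=Af_0+A_1g=0$, so $Af_0=A_1g$). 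One checks the second original equation \eqref{eq:1} is then automatically implied: $A_0f_1+A_1f_0 = A_0(f_0+g)+A_1f_0 = Af_0+A_0g = A_1g+A_0g = Ag = 0$. Hence $\ker\tilde A$ is in bijection with the set of pairs $(f_0,g)$ with $g\in\ker A$ and $Af_0=A_1g$.

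Next I would stratify this solution set by the value of $g$. First, $g$ must range over $\ker A$, but not every such $g$ is admissible: we need $A_1g\in\Ima A$. Moreover, since $g\in\ker A$, we have $Ag=0$, so $A_1g = A_0g$ as well, hence $A_1g\in\Ima(A_{0|\ker A})$ automatically; the genuine constraint is $A_1g\in\Ima A\cap\Ima(A_{0|\ker A})$. The set of admissible $g$ is therefore the preimage under $A_1|_{\ker A}$ (equivalently $A_0|_{\ker A}$) of the subspace $\Ima A\cap\Ima(A_{0|\ker A})$; this preimage is a subspace of $\ker A$, and its dimension is $\dim(\ker A_1\cap\ker A) + \dim(\Ima A\cap\Ima A_{0|\ker A})$ by rank–nullity applied to the map $g\mapsto A_1g$ restricted to $\ker A$ (its image being exactly $\Ima A\cap\Ima A_{0|\ker A}$, its kernel being $\ker A_1\cap\ker A$). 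Finally, for each fixed admissible $g$, the set of valid $f_0$ is a coset of $\ker A$ inside $\F(V)$ (the equation $Af_0=A_1g$ is consistent by admissibility and its solution space is a translate of $\ker A$), contributing $\dim\ker A$ more dimensions. Summing the three contributions — $\dim\ker A$ from the choice of $f_0$, and $\dim(\ker A_1\cap\ker A)+\dim(\Ima A\cap\Ima A_{0|\ker A})$ from the choice of $g$ — gives \eqref{eq:exact}.

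To make this rigorous I would phrase it as an exact sequence or a dimension count on the linear map $\Phi\colon\ker\tilde A\to\ker A$ sending $(f_0,f_1)\mapsto f_0+f_1$; its image is the admissible-$g$ subspace just described, and its kernel consists of pairs with $f_0=f_1$ satisfying $Af_0=0$, hence isomorphic to $\ker A$, giving $\dim\ker\tilde A=\dim\ker A+\dim(\text{image of }\Phi)$ directly, and then the image computation is the rank–nullity step above. The one point requiring care — and the only place the argument could go wrong if stated carelessly — is the bookkeeping showing that the admissibility constraint on $g$ is exactly membership in $\Ima A\cap\Ima A_{0|\ker A}$ rather than the a priori weaker $\Ima A$ alone; this rests on the identity $A_1g=A_0g$ for $g\in\ker A$, which is immediate from $A=A_0+A_1$. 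Everything else is linear algebra over $\f_2$, and I expect no serious obstacle beyond keeping the three dimension contributions correctly separated.
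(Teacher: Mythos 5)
Your proposal is correct and is essentially the paper's own argument in different clothing: the substitution $g=f_0+f_1$ is exactly the paper's decomposition of $\tilde f$ into its diagonal component $(f,f)\in F_D$ and left component $(g,0)\in F_L$, and your rank--nullity count for $\Phi\colon(f_0,f_1)\mapsto f_0+f_1$ (kernel $\cong\ker A$, image the admissible set, itself split by rank--nullity into $\ker A_1\cap\ker A$ and $\Ima A\cap\Ima A_{0|\ker A}$) reproduces the paper's three-term formula $\dim\ker\tilde{A}=\dim\ker\tilde{A}_{|F_D}+\dim\ker\tilde{A}_{|F_L}+\dim(\Ima\tilde{A}_{|F_D}\cap\Ima\tilde{A}_{|F_L})$, using the same key identities $A=A_0+A_1$ and $A_0=A_1$ on $\ker A$. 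The bookkeeping you flag as delicate (admissibility of $g$ being $\Ima A\cap\Ima A_{0|\ker A}$ rather than just $\Ima A$) is handled correctly and matches the paper's computation of the image intersection.
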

\begin{proof}
Let $F_D$ ($D$ for double) be the set of functions in $\F(V)\times\F(V)$ of the form $(f,f)$, and
let $F_L$ ($L$ for left) be the set of functions of the form $(f,0)$ and note that $F_D\cap F_L=\{0\}.$
 Any function $\tilde{f}\in\F(\f_2^{r+1})$, identified with $\F(V)\times\F(V)$ through the
isomorphism~\eqref{eq:iso}, has a unique decomposition as
$\tilde{f}=\tilde{f}_D+\tilde{f}_L$, with $\tilde{f}_D\in F_D$ and
$\tilde{f}_L\in F_L$. Therefore, $\tilde{f}\in\ker\tilde{A}$ if and only if 
$\tilde{A}\tilde{f}_D=\tilde{A}\tilde{f}_L$. We therefore have
\[\dim\ker\tilde{A}=\dim\ker\tilde{A}_{|F_D}+\dim\ker\tilde{A}_{|F_L}+\dim(\Ima\tilde{A}_{|F_D}\cap\Ima\tilde{A}_{|F_L}).\]
On account of \eqref{eq:0} and \eqref{eq:1} any function $\tilde{f}\in F_D$ is in
$\ker\tilde{A}$ iff $(A_0+A_1)f_0=Af_0=0$, i.e. $f_0\in\ker A$. Therefore
  $$
  \dim\ker\tilde{A}_{F_D}=\dim\ker A.
   $$
 Next, we
consider functions $\tilde{f}$ for which $f_1=0$. 
Again by \eqref{eq:0} and \eqref{eq:1} a function $\tilde f\in F_L$ is in
$\ker\tilde{A}$ iff $f_0\in(\ker A_0\cap\ker A_1)$. From $A=A_0+A_1$ we have that
$\ker A_0\cap\ker A_1=\ker A_1\cap \ker A,$ and so
\[
\dim\ker\tilde{A}_{F_L}=\dim(\ker A_1\cap\ker A).
\]
It remains to compute the dimension of
$\Ima\tilde{A}_{|F_D}\cap\Ima\tilde{A}_{|F_L}$.
Let $\tilde{f}= (f,f)\in F_D,$ then from \eqref{eq:0} and \eqref{eq:1} we have 
$\tilde{A}\tilde{f}\cong (Af,Af).$

Now let $\tilde{f'}= (f',0)\in F_L$, then, again from  \eqref{eq:0}-\eqref{eq:1},
$\tilde{A}\tilde{f'}\cong (A_0f',A_1f')$. 
So $\tilde{A}\tilde{f'}\in\Ima F_D$ if
and only if $A_0f'\in\Ima A$ and $A_0f'=A_1f'.$ This last condition is
equivalent, since $A_0+A_1=A$, to $f'\in\ker A$.
Therefore, $\Ima F_D\cap\Ima F_L \simeq \Ima A\cap\Ima A_{0|\ker A}$.
\end{proof}

\noindent{\em Proof of Theorem \ref{thm:3/4}:}
Let $\tilde{S}\subset\f_2^{r+1}$ consist of the columns of the parity-check matrix $H_r$ of $C_r$ to which we add the zero vector, so that the storage code associated to $C_r$ is the kernel of the operator $\tilde{A}$ associated to $\tilde{S}$.
We now apply Lemma~\ref{lemma:exact} to $\tilde{A}$ and
compute the dimensions of the spaces on the right-hand side of \eqref{eq:exact}.

Removing the last coordinate to the elements of $\tilde{S}$ yields a set $S$ equal to the set of columns of the parity-check matrix of the extended Hamming code, plus two copies of the zero vector. The associated operator $A$ is therefore equal to the adjacency operator of the Cayley graph over $\f_2^r$, with generator set equal to all vectors of $\f_2^r$ that end with a $1$ on their last coordinate.
As noted earlier, this graph is the complete bipartite graph on $\f_2^{r-1}$, and
the kernel of $A$ is the space of functions that are of even weight on each of the spaces $V_0$ and $V_1$, 
corresponding to the set of vectors $(x_1,\ldots,x_r)$ satisfying $x_r=0$ and $x_r=1$, and its dimension equals $$\dim \ker A=2^r-2.$$
The last row of the matrix $H_r$ is of weight 2, 
so there are exactly two elements of $\tilde{S}$ that end with a 1 in their last coordinate, which means that $|S_1|=2.$ Furthermore, one of these two elements of $\tilde{S}$ projects onto the zero vector of $\f_2^r$ by $\pi$, so that $S_1=\{0,s\}$ for some non-zero vector $s\in\f_2^r,$ e.g., in \eqref{eq:MH} $s=\one^r$. Applying \eqref{eq:adjacency} we have
\[
A_1f(x)=f(x)+f(x+s).
\]
We therefore have that $\ker A_1$ is equal to the space of functions whose support in $\f_2^r$ is stable by addition by $s$. Now, since $s$ by construction must be equal to a column of the parity-check matrix of the extended Hamming code, it ends with a 1, i.e. is of the form $s=(s_1,\ldots,s_{r-1},s_r=1)$. Therefore, a function is in $\ker A_1$ if and only if its
support is equal to $X\cup (X+s)$ for $X\subset V_0$ and $X+s\subset V_1.$
Intersecting $\ker A_1$ with $\ker A$ restricts the sets $X$ to sets of even size, and thus
$$\dim(\ker A_1\cap\ker A)=\dim\ker A_1 -1=2^{r-1}-1.$$

For the third term in \eqref{eq:exact} observe that $A_1\one_{V_0}=\one_{V_0}+\one_{V_1}=\one_V\in\Ima A\cap\Ima A_{1|\ker A}.$ Furthermore, we clearly have that $\Ima A_1$
must consist of functions whose support is stable by addition by $s$, i.e. $\Ima A_1=\ker A_1$, implying that $\one_{V_0}$
is in $\Ima A$ but not in $\Ima A_1.$ Since $\dim(\Ima A)=2$,
we have therefore that $\Ima A\cap\Ima A_{1|\ker A}$ is
equal to the space of constant functions and of dimension $1$.
Since $A_0+A_1=A$, we have that $A_0$ and $A_1$ are equal on $\ker A$, and
$$\dim(\Ima A\cap\Ima A_{0|\ker A})=1.$$

Collecting the terms in \eqref{eq:exact}, we obtain
   $$
   \dim\ker \tilde A=2^r+2^{r-1}-2,
   $$
which yields the rate expression in the theorem.
\qed

\section{Correcting multiple erasures}\label{sec:multiple}

\subsection{The edge-vertex construction and graph expansion} 
The most likely failure scenario in storage applications is loss of a single node, which corresponds to
correcting a single erasure in the codeword of a storage code. This question is central to the paper that
defined codes with local erasure correction \cite{gopalan2011locality}, and it was also one of the original motivations for studying such codes on graphs in the paper by Mazumdar \cite{Mazumdar2015}. At the same time, correcting multiple erasures represents
a natural extension of this problem, which was addressed in a large number of papers on codes with locality, among them \cite{Kamath2014,Tamo2013}. Motivated by this research, \cite{Mazumdar2015} posed the question of recovering the codeword when multiple vertices are erased. This question can be posed in several ways based
on the vertex recovery procedure, and we proceed to specify our assumptions. Throughout this section
we limit ourselves to $d$-regular graphs and to codes obtained from the edge-vertex construction of Section~\ref{sec:constructions}.

We begin with a code $\cC(G)=\{x, x\in \f_q^{d\cdot|V|}\}$ defined on a graph $G(V,E)$, where as before, the coordinates of the codeword are placed on the vertices of $G$. The coordinates $x_v$ can be viewed either as $d$-vectors over $\f_q$ or as
elements of $\f_{q^d}$, and they are obtained by collecting all the values placed on the edges incident to $v.$ Extending the definition of storage codes, suppose that every vertex $v\in V$ can recover its value $x_v$ from the values of its neighbors $x_u, u\in \cN(v)$ as long as at most $t$ of them are unavailable. Of course, taking $t=0$ takes us back to the original definition of storage codes.
The easiest form of the multiple erasure correction problem arises if we assume that every erased vertex is adjacent to at most $t$ other erased vertices. To address it, we simply place constraints on the edges in the neighborhood
of every vertex: namely, assume that the symbols placed on the edges in $E(v)$ form a vector in a linear code ${D}$ of length $d$ over $\f_q$ that corrects $t$ erasures. This defines a linear storage code $\cC=\cC(G,{D})$ that enables each vertex to recover its value from $d-t$ or more nonerased neighbors. 
The dimension of the code $\cC$ is at least $|V|(2R({D})-1),$ where $R({D}):=\frac{\dim({D})}{d}$ is the rate of the local code ${D}.$

This approach allows correction only of erasure patterns that leave $d-t$ or more neighbors of each
vertex nonerased, which is a somewhat artificial assumption \new{(indeed, the erased vertices may not follow the connections in the graph)}. Lifting it, we will assume next that the set
of erased vertices $S\subset V$ is of arbitrary shape, and engage standard ideas from error-correcting codes on graphs, in particular,
graph expansion and its use in decoding \cite{SipserSpielman1996}. Namely, instead of specifying that the vertices
can correct themselves independently, we will be satisfied if there is one erased vertex with $t$ or fewer neighbors in the subgraph induced by $S$. 
Once it is recovered, there are fewer erased vertices (edges), and (under some conditions)
there will be more erased vertices that can correct themselves from their neighborhoods.

As remarked, this approach is close to the classic error-correcting codes on graphs, also known as the Tanner codes
\cite{Tanner1981}. The only difference between them and our construction of storage codes is related to the way erasures
are applied to the codeword coordinates. Namely, while in earlier works the edges were erased independently of each other based on the properties of the communication channel, in storage codes erasures affect the vertices, which results in erasing all the values of the edges incident to the vertex at once.

The edge-vertex construction enables us to iteratively correct multiple erasures if we assume that the underlying graph
has expansion properties.
We will use the following well-known result.

\begin{lemma}\label{lemma:EML}{\rm (Expander mixing lemma \cite[Lemma 2.5]{Hoory2006})} Let $G$ be a $d$-regular graph with $N$ vertices and eigenvalues $\lambda_1=d\ge \lambda_2\ge \dots\ge \lambda_N.$ Then for any
$U,T\subset V$
   $$
   \Big||E(U,T)|-\frac{d |U||T|}{N}\Big|\le \lambda\sqrt{|U||T|},
   $$
   where $\lambda:=\max(|\lambda_2|,|\lambda_N|)$.
\end{lemma}

\begin{proposition} Consider a storage code $\cC(G,{D})$ defined by the edge-vertex construction. 
Suppose that $G$ is a $d$-regular graph with $N$ vertices and the local code ${D}$ corrects $t$ erasures. Let $U\subset V, |U|=\sigma N$ be the set of erased vertices. As long as 
   \begin{equation}\label{eq:S}
   \sigma\le \frac td-\frac{\lambda}d,
   \end{equation}
the erased vertices can be recovered based on the local iterative procedure.
\end{proposition}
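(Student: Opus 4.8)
The plan is to argue that the local iterative procedure gets stuck only if there is a nonempty set $W\subseteq U$ of still-erased vertices in which \emph{every} vertex has more than $t$ erased neighbors, i.e., $|E(\{w\},W)|\ge t+1$ for all $w\in W$. Summing this over $w\in W$ and noting that each edge inside $W$ is counted twice, we get $2|E(W,W)|\ge (t+1)|W|$, hence $|E(W,W)|\ge \frac{t+1}{2}|W|$. The goal is then to show that the expansion hypothesis \eqref{eq:S} forces $|E(W,W)|<\frac{t+1}{2}|W|$ for every nonempty $W\subseteq U$, which is the contradiction that certifies termination with all of $U$ recovered.

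\textbf{Key steps.} First I would invoke the Expander Mixing Lemma (Lemma~\ref{lemma:EML}) with $U:=W$ and $T:=W$, giving
\[
|E(W,W)|\le \frac{d|W|^2}{N}+\lambda|W|.
\]
(Here I count each internal edge once on each side, matching the convention under which the stuck condition reads $|E(W,W)|\ge\frac{t+1}{2}|W|$; one should double-check the factor-of-two bookkeeping between $|E(W,W)|$ as ``ordered pairs'' versus ``edges'', but this only shifts constants consistently.) Second, since $W\subseteq U$ we have $|W|\le|U|=\sigma N$, so $\frac{d|W|^2}{N}\le d\sigma|W|$, and therefore
\[
|E(W,W)|\le d\sigma|W|+\lambda|W|=(d\sigma+\lambda)|W|.
\]
Third, the hypothesis \eqref{eq:S} says $\sigma\le \frac td-\frac\lambda d$, i.e. $d\sigma+\lambda\le t<t+1$, which combined with the previous display gives $|E(W,W)|\le t|W|<\frac{t+1}{2}|W|$ once $t\ge 1$ (and for $t\ge 1$ we even have the stronger bound $|E(W,W)|\le t|W|$, contradicting the existence of a vertex with $t+1$ erased neighbors already at the level of a single vertex when $|W|=1$; in general one obtains a vertex of $W$ of erased-degree at most $t$, which can then correct itself). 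Iterating: once that vertex is recovered, the remaining erased set is a proper subset, still contained in $U$, so the same argument applies, and the process terminates with $U=\varnothing$.

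\textbf{Main obstacle.} The only delicate point is the combinatorial bookkeeping: reconciling the statement of the Expander Mixing Lemma (which counts edges between possibly overlapping sets, and when $U=T$ counts each internal edge twice) with the ``average erased-degree'' argument, so that the threshold in \eqref{eq:S} comes out exactly as $\frac td-\frac\lambda d$ rather than off by a factor of two. Concretely, the stuck-set inequality should be derived as: if no vertex of $W$ has $\le t$ erased neighbors then the sum of internal degrees is $\ge (t+1)|W|$, i.e. $|E(W,W)|\ge(t+1)|W|$ in the ``ordered pairs'' convention of Lemma~\ref{lemma:EML}; then the mixing lemma gives $|E(W,W)|\le d\sigma|W|+\lambda|W|\le t|W|<(t+1)|W|$, a contradiction. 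Hence some vertex of $W$ has at most $t$ erased neighbors and can be decoded, and induction on $|U|$ finishes the proof. Everything else is a routine substitution.
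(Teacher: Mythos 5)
Your proof is correct and follows essentially the same route as the paper: both apply the Expander Mixing Lemma to the erased set against itself, convert the resulting bound on internal edges into the existence of a vertex with at most $t$ erased (equivalently, at least $d-t$ nonerased) neighbors under $\sigma\le \frac{t-\lambda}{d}$, and then iterate since the remaining erased set still satisfies the hypothesis. Your stuck-set/contrapositive phrasing versus the paper's direct edge-boundary count ($d|U|=2|E(U)|+|\partial(U)|$) is only a cosmetic difference, and your factor-of-two bookkeeping is handled correctly.
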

\begin{proof} Taking $T=U$ in Lemma \ref{lemma:EML}, we obtain for the number of edges in the subgraph induced by $U$
the inequality
   $$
   2|E(U)|\le \frac dN |U|^2+\lambda |U|.
   $$
Let $\partial (U)=\{(u,v)\in E(G): u\in U, v\in V\backslash U\}$ be the {\em edge boundary}
of $U$. We have
   $$
   d |U|=2 |E(U)|+|\partial(U)|.
   $$
Taken together, this implies
  $$
  \frac{|\partial(U)|}{\sigma N}\ge d(1-\sigma)-\lambda.
  $$
In other words, there exists a vertex $v\in U$ with at least $d(1-\sigma)-\lambda$ nonerased neighbors. As long
as 
   $$
   d(1-\sigma)-\lambda\ge d-t,
   $$
its value can be recovered using the local code ${D}.$ Under the assumption \eqref{eq:S} this inequality is
satisfied, so size of the erased set of vertices has decreased. The remaining set of erased
vertices $U'=U\backslash \{v\}$ also satisfies \eqref{eq:S}, and the proof
is concluded by straightforward induction.
\end{proof}

To use this result, we need the spectral gap $d-\lambda$ to be large compared to $d$ or $\lambda$ much smaller
than $d.$
The limits for the spectral gap are given by the Alon-Boppana bound \cite[Thm.2.7]{Hoory2006}, namely for any $d$-regular graph on $n$ vertices
   $$
   \lambda\ge 2\sqrt{d-1}-o_n(1).
   $$
The graphs with $\lambda\le 2\sqrt{d-1}$ are known to exist by the Lubotzky-Phillips-Sarnak and Margulis constructions
\cite[Sec.5.11]{Hoory2006}. Assuming that the graph $G$ is from this family, we can state the following:
\begin{corollary} There exist storage codes $\cC(G,{D})$ on $d$-regular graphs with local codes correcting $t$ erasures
that recover from any pattern of $s$ erased vertices as long as their proportion $\sigma=s/n$ satisfies
   $$
   \sigma \le \frac td-O(d^{-1/2}).
   $$
\end{corollary}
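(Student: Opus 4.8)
The plan is to combine the preceding Proposition with an off-the-shelf family of near-optimal expanders. Fix the erasure parameter $t$, and choose the degree $d$ to be one for which an explicit Ramanujan family is available: the Lubotzky--Phillips--Sarnak and Margulis constructions furnish, for suitable $d$ (e.g. $d-1$ a prime), an infinite sequence of connected $d$-regular graphs $G_n$ on $n\to\infty$ vertices with second eigenvalue bounded by $\lambda\le 2\sqrt{d-1}$. For the local code $D$ it suffices to take any linear code of length $d$ correcting $t$ erasures — for instance a Reed--Solomon code of dimension $d-t$ over a field $\f_q$ with $q\ge d$ — so that the edge-vertex construction $\cC(G_n,D)$ of Section~\ref{sec:constructions} is well defined for every $n$.

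Next I would simply invoke the Proposition: for $G$ a $d$-regular graph on $N$ vertices and $D$ a local code correcting $t$ erasures, the code $\cC(G,D)$ recovers every erasure pattern $U$ with $|U|=\sigma N$ provided $\sigma\le t/d-\lambda/d$. Substituting the Ramanujan bound $\lambda\le 2\sqrt{d-1}$ gives the admissible fraction
$$
\frac td-\frac{\lambda}{d}\ge \frac td-\frac{2\sqrt{d-1}}{d}\ge \frac td-\frac{2}{\sqrt d},
$$
and since $2/\sqrt d=O(d^{-1/2})$ this is exactly the claimed threshold $\sigma\le t/d-O(d^{-1/2})$, with a constant uniform over the entire family $\{G_n\}$. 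Thus the codes $\cC(G_n,D)$ witness the corollary.

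There is essentially no obstacle here beyond bookkeeping; the only point deserving a remark is that Ramanujan graphs of a prescribed degree $d$ are known to exist only for special values of $d$, so the statement should be read as an existence result — for infinitely many $d$ there is an infinite family of $d$-regular graphs (with $n\to\infty$) attaining the stated recovery fraction — rather than a claim valid for every single $d$. If one wishes to remove even this mild caveat, one may instead appeal to the Marcus--Spielman--Srivastava theorem guaranteeing bipartite Ramanujan graphs of every degree, at the cost of working with bipartite dependence graphs; either way the spectral input is standard and the corollary follows immediately from the Proposition.
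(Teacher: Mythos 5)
Your proposal is correct and follows exactly the route the paper takes: instantiate the graph $G$ with a Ramanujan family (LPS/Margulis) so that $\lambda\le 2\sqrt{d-1}$, and plug this bound into the threshold $\sigma\le t/d-\lambda/d$ from the preceding Proposition, giving $\sigma\le t/d-O(d^{-1/2})$. Your added caveat about the restricted set of admissible degrees $d$ (and the Marcus--Spielman--Srivastava alternative) is a reasonable refinement but not needed beyond what the paper asserts.
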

Note that large spectral gap guarantees that a subset $U$ has many edges that connect it with its complement, or in other words, its {\em edge expansion ratio} is large. Further connections between the spectral gap and expansion are discussed in \cite[Sec.~4.5]{Hoory2006}.

\section{Discussion}\label{sec:discussion}
\subsection{Numerical experiments}
It is tempting to look for other coset graphs of linear codes whose full-parity storage code has rate greater than $1/2$. It is a challenge to derive general formulae for dimensions however. With computer help we find:

\begin{proposition}\label{prop:numerical} \hspace{1cm}

(a) The binary Golay code of length $23$ and dimension $11$ yields a graph $G$ on $N=2048$ vertices that supports a linear storage code of rate $41/64$.

(b) The 2-error-correcting binary BCH code of length $n=2^s-1$ and dimension $k=2^s-1-2s$ yields a graph $G_s$
with $N=2^{2s}$ vertices that supports a linear storage code with rate given in the following table
\begin{center}\begin{tabular}{c@{\quad}cccccc}
$s$&$4$&$5$&$6$&$7$&$8$\\
$R_2(G_s)$&$\frac{39}{64}$&$\frac{347}{512}$&$\frac{1497}{2048}$&$\frac{6387}{8192}$&$\frac{26859}{32768}$.\\[.1in]
\end{tabular}\end{center}
\end{proposition}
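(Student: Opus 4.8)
The proposition is flagged by the authors as computer-assisted (``with computer help we find''), so a proof is a finite verification, and the plan is to set up the reduction cleanly and then describe how to carry out the rank computations efficiently. For a binary code $C$ with an $r\times n$ parity-check matrix $H$, let $S$ be the set of columns of $H$, form the coset graph $G=\Cay(\f_2^r,S)$, and build the $N\times N$ matrix $\tilde A=I_N+A(G)$ of \eqref{eq:fH} with $N=2^r$; the full-parity storage code is $\ker\tilde A$, of rate $(N-\rank_{\f_2}\tilde A)/N$. A first useful observation is that this rate depends only on $C$, not on the choice of $H$: any two parity-check matrices satisfy $H'=MH$ with $M\in\mathrm{GL}_r(\f_2)$, so the column sets are related by the group automorphism $x\mapsto Mx$ of $\f_2^r$, which is an isomorphism $\Cay(\f_2^r,S)\cong\Cay(\f_2^r,M(S))$ conjugating $\tilde A$ by a permutation matrix and leaving the rank unchanged; hence any convenient $H$ may be used. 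For (a) this makes $r=11$, $N=2048$, and the assertion is the single equality $\rank_{\f_2}\tilde A=2048-1312=736$; for (b), $r=2s$, $N=2^{2s}$, and the assertions are $\rank_{\f_2}\tilde A\in\{100,330,1102,3610,11818\}$ for $s=4,\dots,8$ respectively.

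Next I would fix the parity-check matrices explicitly. For the Golay code take a standard $11\times 23$ parity-check matrix, e.g.\ one obtained by deleting a coordinate from a parity-check matrix of the $[24,12,8]$ extended Golay code; since its minimum distance is $7\ge 4$, the graph $G$ is triangle-free. For the $2$-error-correcting primitive narrow-sense BCH code of length $n=2^s-1$, fix a primitive element $\alpha\in\f_{2^s}$ and let column $j$ of $H$ (for $0\le j\le n-1$) be the concatenation of the coordinate vectors of $\alpha^{j}$ and $\alpha^{3j}$ in a fixed $\f_2$-basis of $\f_{2^s}$; because the cyclotomic cosets of $1$ and $3$ modulo $2^s-1$ are disjoint and of size $s$ for every $s\ge 4$, this $2s\times n$ matrix has rank $2s$, so $k=2^s-1-2s$ as stated, and again the minimum distance $5\ge 4$ makes $G$ triangle-free. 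In both families $n$ is odd, so $|S\cup\{0\}|=n+1$ is even, which is consistent via Lemma~\ref{lem:oddeven} and Theorem~\ref{cor:R>1/2}(a) with the claimed rate above $1/2$; moreover the rates for $s=7,8$ exceed $3/4$, so one may additionally verify the necessary condition $C^\bot\ast C^\bot\subseteq C$ of Theorem~\ref{cor:R>1/2}(b) as a sanity check.

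It then remains to compute $\rank_{\f_2}\tilde A$ in each case, and the only genuine difficulty is the size of the $s=7,8$ instances, with matrices of order $2^{14}$ and $2^{16}$. The point that keeps this tractable is that $\tilde A$ need never be formed or eliminated densely: by \eqref{eq:adjacency} one has $(\tilde A v)(x)=\sum_{s\in S\cup\{0\}}v(x+s)$ with $|S\cup\{0\}|=n+1\ll N$, so a single matrix--vector product costs $O(nN)$ binary operations, and the rank follows from $O(N)$ such products by a black-box (Wiedemann-type) method exploiting the symmetry of $\tilde A$; equivalently, $\rank_{\f_2}\tilde A$ is the $\f_2$-dimension of the principal ideal generated by $\sum_{s\in S\cup\{0\}}\prod_{i\in\supp s}(1+y_i)$ in the local ring $\f_2[y_1,\dots,y_r]/(y_1^2,\dots,y_r^2)$, which may be computed by row-reducing the images of the $2^r$ monomials under the corresponding $O(rN)$ coordinatewise transform. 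For $N\le 2^{16}$ even word-packed dense Gaussian elimination over $\f_2$, at $\Theta(N^3/64)$ word operations, is within reach. Reading off $N-\rank_{\f_2}\tilde A$ and dividing by $N$ then reproduces exactly the fractions in the table; since no closed-form dimension formula is available here, this computer-assisted verification is genuinely the argument, and the main obstacle is one of computational scale rather than of mathematical substance.
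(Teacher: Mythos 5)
Your proposal is correct and is essentially the same route as the paper's: Proposition~\ref{prop:numerical} is presented as a computer-assisted finding with no written proof, and the verification is exactly what you describe --- form $\tilde A=I_N+A(G)$ for the coset graph and compute its $\f_2$-rank --- with your translation of the tabulated rates into the target ranks ($736$ for the Golay case and $100,330,1102,3610,11818$ for $s=4,\dots,8$) being arithmetically correct, and your observation that the answer is independent of the choice of parity-check matrix is valid. One small slip to fix in part (a): deleting a column from a $12\times 24$ parity-check matrix of the $[24,12,8]$ extended Golay code gives a $12\times 23$ matrix, which is a parity-check matrix of the shortened $[23,11,8]$ code and would produce a graph on $2^{12}=4096$ vertices, not the claimed $N=2048$; since $N=2^{11}$ forces $r=11$, the computation must be run on an $11\times 23$ parity-check matrix of the $[23,12,7]$ Golay code proper (note the paper's own phrase ``dimension $11$'' is already inconsistent with $N=2048$, and your choice $r=11$ resolves it in the only way compatible with the stated vertex count).
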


The sequence of rate values obtained in Proposition~\ref{prop:numerical} is {\small $0.6094, 0.6777, 0.7309, 0.7796, 0.8196,$} and the value
$R_2(G_8)=0.8196$ of the storage code of length $N=65536$ now represents the largest known rate of storage codes on triangle-free graphs over any alphabet.

\subsection{Open problems} \hspace{1cm}

1. {\em The rate question:}
Arguably, the central question is whether rates of storage codes on triangle-free graphs can be arbitrarily close to 1. If not, what is an upper limit for these rates ?

2. {\em BCH codes:} In view of the numerical experiments it is of interest to find or estimate the dimension of storage codes obtained from the 
2-error-correcting BCH codes. More specifically, what is $\limsup_{s\to\infty}R_2(G_s)$? At this point we cannot even rule out
that the sequence $(R_2(G_s))_s$ in the limit reaches 1, which would resolve the question of the maximum possible rate of storage codes on triangle-free graphs.

3. {\em Reed-Muller codes:} \new{Another good candidate arises from the family of Reed-Muller codes. Fix $m\ge 4$
and consider Boolean polynomials $f(v):\f_2^m\to \f_2.$ The second order Reed-Muller code $RM(m,2)$ is spanned by the
set of evaluations of functions of degree $\le 2$ on all the elements of $\f_2^m.$ Consider a subcode $C_m$ of the 
punctured code $RM(m,2)$ spanned by the evaluations of the functions $v_i, 1\le i\le m$ and $v_iv_j, 1\le i<j\le m$ on the nonzero points of $\f_2^m$. The length of the code $C_m$ is $2^m-1$ and its dimension is $m(m+1)/2$. Another way of constructing this code is to take a linear space spanned by the Simplex code $S_m$ and its Schur square $S_m\ast S_m.$ 

Now consider the code $(C_m)^\bot$, i.e., a code whose parity check matrix $H$ has rows given by the evaluations of the linear and quadratic functions. This is a code of odd length with even-weight parities, and it also contains its 
dual as well as Schur powers of the dual, fulfilling the necessary conditions for high-rate storage codes in 
Theorem \ref{cor:R>1/2}. Finding the actual rate of the storage code $\cC$ is however not straightforward, and we leave this as an open problem.}

\subsection{Erasure correction and bootstrap percolation}
We conclude with one more observation regarding correcting multiple erasures with storage codes. Given a graph $G$, we again assume the edge-vertex construction that relies on a local code correcting $t$ erasures.
Assume that the vertices are erased randomly and independently with some probability $\bar p,$ 
forming a set $U\subset V$ of erased  vertices. We wish for the iterative procedure employed in the previous section to successively recover the erased vertices until all are corrected. Rephrasing, suppose that functional vertices are selected randomly with probability $p=1-\bar p,$ and an erased vertex with not more than $t$ erased neighbors recovers its value, becoming functional. In graph-theoretic terms this procedure is known as 
{\em bootstrap percolation,} and it represents an established branch of percolation theory. 
Percolation occurs when all the erased vertices have been recovered (for infinite graphs, recovered with probability one). This problem was introduced in \cite{Chalupa1979}; see the recent paper \cite{Gravner2015} for an
overview of the main results. The main problem studied in the literature is the determination of the
{\em critical probability}, defined as 
    $$
    p_c:=\inf\{p: {\mathbb P}_p(U \text{ is corrected by the iterative process})\ge 1/2\}.
    $$
The known results include the determination of $p_c$ for the infinite grid $[n]^d$ as $n\to\infty$ (see \cite{Holroyd2003} for $d=2$ and \cite{Balogh2012} for all $d$) as well as for several other classes of graphs.
These results translate directly to the corresponding thresholds for erasure correction with storage codes
on graphs in which the vertices are erased randomly and independently.    
    
Bootstrap percolation has been also considered in the deterministic setting \cite{CojaOghlan2015,Freund2018}, 
where the main question is finding the minimum size of a set of functional vertices that enables the code to 
correct all erasures, or, using the language of the cited papers, the maximum number of {\em infected vertices} 
that infect the entire graph. Rephrasing, this means that we attempt to pinpoint a particular combination of erasures of the largest possible size that can be recovered through the iterative $t$-neighbor process. 
This problem, however, is an opposite of the natural question addressed by storage codes, where one is interested in the largest $s$ such that {\em any} combination of $s$ erased vertices can be recovered, or the smallest size of the set of nonerased vertices that enable recovery of the entire graph irrespective of the shape of the erased set $U.$ 

\subsection*{\sc Acknowledgment} The research of Alexander Barg was partially supported by NSF grants CCF2110113 and CCF2104489.

\end{document}